\documentclass[journal,10pt]{IEEEtran}
%captionsoff
\usepackage{cite}
% cite.sty was written by Donald Arseneau
% V1.6 and later of IEEEtran pre-defines the format of the cite.sty package
% \cite{} output to follow that of IEEE. Loading the cite package will
% result in citation numbers being automatically sorted and properly
% "compressed/ranged". e.g., [1], [9], [2], [7], [5], [6] without using
% cite.sty will become [1], [2], [5]--[7], [9] using cite.sty. cite.sty's
% \cite will automatically add leading space, if needed. Use cite.sty's
% noadjust option (cite.sty V3.8 and later) if you want to turn this off.
% cite.sty is already installed on most LaTeX systems. Be sure and use
% version 4.0 (2003-05-27) and later if using hyperref.sty. cite.sty does
% not currently provide for hyperlinked citations.
% The latest version can be obtained at:
% http://www.ctan.org/tex-archive/macros/latex/contrib/cite/
% The documentation is contained in the cite.sty file itself.

% *** GRAPHICS RELATED PACKAGES ***
%
\ifCLASSINFOpdf
  \usepackage[pdftex]{graphicx}
  % declare the path(s) where your graphic files are
  % \graphicspath{{../pdf/}{../jpeg/}}
  % and their extensions so you won't have to specify these with
  % every instance of \includegraphics
  % \DeclareGraphicsExtensions{.pdf,.jpeg,.png}
\else
  % or other class option (dvipsone, dvipdf, if not using dvips). graphicx
  % will default to the driver specified in the system graphics.cfg if no
  % driver is specified.
   \usepackage[dvips]{graphicx}
  % declare the path(s) where your graphic files are
  \graphicspath{{./fig/}}
  \usepackage[usenames,dvipsnames]{pstricks}
\usepackage{pst-grad} % For gradients
\usepackage{pst-plot} % For axes
  % and their extensions so you won't have to specify these with
  % every instance of \includegraphics
  % \DeclareGraphicsExtensions{.eps}
\fi
% graphicx was written by David Carlisle and Sebastian Rahtz. It is
% required if you want graphics, photos, etc. graphicx.sty is already
% installed on most LaTeX systems. The latest version and documentation can
% be obtained at:
% http://www.ctan.org/tex-archive/macros/latex/required/graphics/
% Another good source of documentation is "Using Imported Graphics in
% LaTeX2e" by Keith Reckdahl which can be found as epslatex.ps or
% epslatex.pdf at: http://www.ctan.org/tex-archive/info/
%
% latex, and pdflatex in dvi mode, support graphics in encapsulated
% postscript (.eps) format. pdflatex in pdf mode supports graphics
% in .pdf, .jpeg, .png and .mps (metapost) formats. Users should ensure
% that all non-photo figures use a vector format (.eps, .pdf, .mps) and
% not a bitmapped formats (.jpeg, .png). IEEE frowns on bitmapped formats
% which can result in "jaggedy"/blurry rendering of lines and letters as
% well as large increases in file sizes.
%
% You can find documentation about the pdfTeX application at:
% http://www.tug.org/applications/pdftex

\usepackage{amssymb}
\usepackage{latexsym}

% *** MATH PACKAGES ***
%
\usepackage[cmex10]{amsmath}
% A popular package from the American Mathematical Society that provides
% many useful and powerful commands for dealing with mathematics. If using
% it, be sure to load this package with the cmex10 option to ensure that
% only type 1 fonts will utilized at all point sizes. Without this option,
% it is possible that some math symbols, particularly those within
% footnotes, will be rendered in bitmap form which will result in a
% document that can not be IEEE Xplore compliant!
%
% Also, note that the amsmath package sets \interdisplaylinepenalty to 10000
% thus preventing page breaks from occurring within multiline equations. Use:
%\interdisplaylinepenalty=2500
% after loading amsmath to restore such page breaks as IEEEtran.cls normally
% does. amsmath.sty is already installed on most LaTeX systems. The latest
% version and documentation can be obtained at:
% http://www.ctan.org/tex-archive/macros/latex/required/amslatex/math/

\usepackage{amsfonts}

% *** SPECIALIZED LIST PACKAGES ***
%
\usepackage{algorithmic}
% algorithmic.sty was written by Peter Williams and Rogerio Brito.
% This package provides an algorithmic environment fo describing algorithms.
% You can use the algorithmic environment in-text or within a figure
% environment to provide for a floating algorithm. Do NOT use the algorithm
% floating environment provided by algorithm.sty (by the same authors) or
% algorithm2e.sty (by Christophe Fiorio) as IEEE does not use dedicated
% algorithm float types and packages that provide these will not provide
% correct IEEE style captions. The latest version and documentation of
% algorithmic.sty can be obtained at:
% http://www.ctan.org/tex-archive/macros/latex/contrib/algorithms/
% There is also a support site at:
% http://algorithms.berlios.de/index.html
% Also of interest may be the (relatively newer and more customizable)
% algorithmicx.sty package by Szasz Janos:
% http://www.ctan.org/tex-archive/macros/latex/contrib/algorithmicx/

\usepackage{rotating}

% *** ALIGNMENT PACKAGES ***
%
\usepackage{array}
% Frank Mittelbach's and David Carlisle's array.sty patches and improves
% the standard LaTeX2e array and tabular environments to provide better
% appearance and additional user controls. As the default LaTeX2e table
% generation code is lacking to the point of almost being broken with
% respect to the quality of the end results, all users are strongly
% advised to use an enhanced (at the very least that provided by array.sty)
% set of table tools. array.sty is already installed on most systems. The
% latest version and documentation can be obtained at:
% http://www.ctan.org/tex-archive/macros/latex/required/tools/

\usepackage{mdwmath}
\usepackage{mdwtab}
\usepackage[caption=false,font=footnotesize]{subfig}
%
% The latest version and documentation can be obtained at:
% http://www.ctan.org/tex-archive/macros/latex/contrib/subfig/
% The latest version and documentation of caption.sty can be obtained at:
% http://www.ctan.org/tex-archive/macros/latex/contrib/caption/

% *** FLOAT PACKAGES ***
%
\usepackage{fixltx2e}
% fixltx2e, the successor to the earlier fix2col.sty, was written by
% Frank Mittelbach and David Carlisle. This package corrects a few problems
% in the LaTeX2e kernel, the most notable of which is that in current
% LaTeX2e releases, the ordering of single and double column floats is not
% guaranteed to be preserved. Thus, an unpatched LaTeX2e can allow a
% single column figure to be placed prior to an earlier double column
% figure. The latest version and documentation can be found at:
% http://www.ctan.org/tex-archive/macros/latex/base/

\usepackage{stfloats}

\usepackage[on]{auto-pst-pdf}

% *** PDF, URL AND HYPERLINK PACKAGES ***
%
\usepackage{url}
% url.sty was written by Donald Arseneau. It provides better support for
% handling and breaking URLs. url.sty is already installed on most LaTeX
% systems. The latest version can be obtained at:
% http://www.ctan.org/tex-archive/macros/latex/contrib/misc/
% Read the url.sty source comments for usage information. Basically,
% \url{my_url_here}.

% *** Do not adjust lengths that control margins, column widths, etc. ***
% *** Do not use packages that alter fonts (such as pslatex).         ***
% There should be no need to do such things with IEEEtran.cls V1.6 and later.
% (Unless specifically asked to do so by the journal or conference you plan
% to submit to, of course. )
\usepackage{array,setspace}

\usepackage{amsthm,multirow,footnote,hyperref}
\usepackage[on]{auto-pst-pdf}
\graphicspath{{./fig/}}
\usepackage{url}
\usepackage{booktabs}
\renewcommand{\arraystretch}{1.2}
\newcommand{\ra}[1]{\renewcommand{\arraystretch}{#1}}

\usepackage{algorithmic}
\usepackage{algorithm}
% Example definitions.
% --------------------
\makesavenoteenv{tabular}
% correct bad hyphenation here
\hyphenation{op-tical net-works semi-conduc-tor}

\providecommand{\eqdef}{\overset{def}{=}}

\providecommand{\Cspace}{\mathbb{C}}
\providecommand{\Rspace}{\mathbb{R}}

\providecommand{\Zspace}{\mathbb{Z}}
\providecommand{\Nspace}{\mathbb{N}}

\providecommand{\Esp}[1]{\mathbb{E}\left[#1\right]}

\providecommand{\mat}[1]{\boldsymbol{#1}}
\providecommand{\vct}[1]{\boldsymbol{#1}}

\providecommand{\pinv}{\dag}
\providecommand{\krop}{\otimes}

\providecommand{\xp}[1]{\textsf{Exp.~{#1}}\ }

\providecommand{\pinv}{\text{\ding{61}}}

\newtheorem{proposition}{Proposition}

\newtheorem{corollary}{Corollary}
\newtheorem{lemma}{Lemma}

\begin{document}
%
% paper title
% can use linebreaks \\ within to get better formatting as desired
\title{Estimation of Sparse MIMO Channels with Common Support.}
%
%
% author names and IEEE memberships
% note positions of commas and nonbreaking spaces ( ~ ) LaTeX will not break
% a structure at a ~ so this keeps an author's name from being broken across
% two lines.
% use \thanks{} to gain access to the first footnote area
% a separate \thanks must be used for each paragraph as LaTeX2e's \thanks
% was not built to handle multiple paragraphs
%

\author{Yann~Barbotin,~\IEEEmembership{Student~Member,~IEEE,}
        Ali~Hormati,~\IEEEmembership{Member,~IEEE,}
        Sundeep~Rangan,~\IEEEmembership{Member,~IEEE,}
        and~Martin~Vetterli,~\IEEEmembership{Fellow,~IEEE}% <-this % stops a space
\thanks{Y. Barbotin, A. Hormati and M. Vetterli are with Ecole Polytechnique F\'{e}d\'{e}rale de Lausanne, 1015 Lausanne, Switzerland.}% <-this % stops a space
\thanks{S. Rangan is with the Department of Electrical and Computer Engineering, Polytechnic Institute of New York University, Brooklyn, NY.}%
\thanks{M. Vetterli is also with the Department of
Electrical Engineering and Computer Sciences, University of
California, Berkeley, CA.}%
\thanks{This work has been submitted to the IEEE for possible publication. Copyright may be transferred without notice, after which this version may no longer be accessible.}%
\thanks{This research is supported by \emph{Qualcomm Inc.}, \emph{ERC Advanced Grant – Support for Frontier Research - SPARSAM Nr : 247006} and \emph{SNF Grant - New Sampling Methods for Processing and Communication Nr : 200021-121935/1}.}}%

%\thanks{Manuscript received xxxxxxx; revised xxxxxxxx.}}%

% note the % following the last \IEEEmembership and also \thanks -
% these prevent an unwanted space from occurring between the last author name
% and the end of the author line. i.e., if you had this:
%
% \author{....lastname \thanks{...} \thanks{...} }
%                     ^------------^------------^----Do not want these spaces!
%
% a space would be appended to the last name and could cause every name on that
% line to be shifted left slightly. This is one of those "LaTeX things". For
% instance, "\textbf{A} \textbf{B}" will typeset as "A B" not "AB". To get
% "AB" then you have to do: "\textbf{A}\textbf{B}"
% \thanks is no different in this regard, so shield the last } of each \thanks
% that ends a line with a % and do not let a space in before the next \thanks.
% Spaces after \IEEEmembership other than the last one are OK (and needed) as
% you are supposed to have spaces between the names. For what it is worth,
% this is a minor point as most people would not even notice if the said evil
% space somehow managed to creep in.

% The paper headers
\markboth{Barbotin \MakeLowercase{\textit{et al.}}: Estimation of Sparse MIMO Channels with Common Support.}{}%
%{Shell \MakeLowercase{\textit{et al.}}: Bare Demo of IEEEtran.cls for Journals}
% The only time the second header will appear is for the odd numbered pages
% after the title page when using the twoside option.
%
% *** Note that you probably will NOT want to include the author's ***
% *** name in the headers of peer review papers.                   ***
% You can use \ifCLASSOPTIONpeerreview for conditional compilation here if
% you desire.

% If you want to put a publisher's ID mark on the page you can do it like
% this:
%\IEEEpubid{0000--0000/00\$00.00~\copyright~2007 IEEE}
% Remember, if you use this you must call \IEEEpubidadjcol in the second
% column for its text to clear the IEEEpubid mark.

% use for special paper notices
%\IEEEspecialpapernotice{(Invited Paper)}

% make the title area
\maketitle

\begin{abstract}
We consider the problem of estimating sparse communication channels in the MIMO context.  In small to medium bandwidth communications,  as in the current standards for OFDM and CDMA communication systems (with bandwidth up to 20 MHz), such channels are individually sparse and at the same time share a common support set. Since the underlying physical channels are inherently continuous-time, we propose a parametric sparse estimation technique based on finite rate of innovation (FRI) principles. Parametric estimation is especially relevant to MIMO communications as it allows for a robust estimation and concise description of the channels.

The core of the algorithm is a generalization of conventional spectral estimation methods to multiple input signals with common support. We show the application of our technique for channel estimation in OFDM (uniformly/contiguous DFT pilots) and CDMA downlink (Walsh-Hadamard coded schemes). In the presence of additive white Gaussian noise, theoretical lower bounds on the estimation of SCS channel parameters in Rayleigh fading conditions are derived. Finally, an analytical spatial channel model is derived, and simulations on this model in the OFDM setting show the symbol error rate (SER) is reduced by a factor 2 (0 dB of SNR) to 5 (high SNR) compared to standard non-parametric methods --- e.g. lowpass interpolation.

\end{abstract}
\begin{IEEEkeywords}
Channel estimation, MIMO, OFDM, CDMA, Finite Rate of Innovation.
\end{IEEEkeywords}
%
%%%%%%%%%%%%%%%%%%%%%%%%%%%%%%%%%%%%%%%%%%%%%%%%%%%%%%%%%%%%%%%%%%%%%%%%%%%%%%%%%%%%%%%%%%%%%%%%%%%%%%%%%%%%%%%%%%%%
%%%%%%%%%%%%%%%%%%%%%%%%%%%%%%%%%%%%%%%%%%%%%%%%%%%%%%%%%%%%%%%%%%%%%%%%%%%%%%%%%%%%%%%%%%%%%%%%%%%%%%%%%%%%%%%%%%%%
\section{Introduction}

\begin{figure*}[h]
\centering
\includegraphics[width=\textwidth]{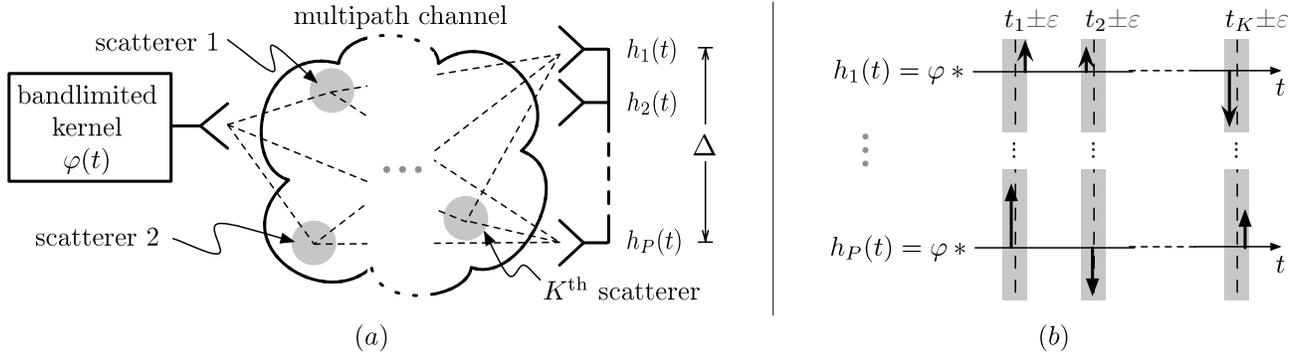}
\caption{(a) Transmission over a bandlimited medium with two scatterers and $P$ receiving antennas. (b) The $P$ channels contain two paths arriving at the same time up to $\pm\varepsilon$, and are thus no exact SCS channels for $\varepsilon>0$.}
\label{fig:SCSMIMO}
\end{figure*}

Multiple input multiple output (MIMO) antenna wireless systems
enable significant gains in both throughput and reliability
\cite{Telatar1999,Foschini1998,Alamouti1998,Zheng2003} and are now incorporated
in several commercial wireless standards \cite{IEEE802.11n,Dahlman2008}.
However, critical to realizing the full potential of
MIMO systems is the need for accurate channel estimates at the receiver,
and, for certain schemes, at the transmitter as well.
As the number of transmit antennas is increased,
the receiver must estimate proportionally  more channels,  which in turn increases the
pilot overhead and tends to reduce
the overall MIMO throughput gains \cite{HochwaldHass2003}.

To reduce this channel estimation overhead, the key insight of this paper is
that most MIMO channels have an approximately \emph{sparse common support} (SCS).
That is, the channel in each transmit-receive (TX-RX) antenna pair can be modeled as a discrete
multipath channel, with the relative time delays being common across different TX-RX pairs.
The commonality across the different antenna pairs reduces the overall number of degree
of freedom to estimate, which can in turn be used to reduce the pilot overhead or
improve the channel estimate.  Also, in communication systems that depend on channel state
feedback from the RX to the TX, the SCS model may enable a more compressed representation.

To exploit the SCS property of MIMO channels, we propose a variant along \cite{barbotin2011a,theLetterPaper} of the
finite rate of innovation (FRI) framework, originally developed in \cite{Blu2008}.
The method, which we call SCS-FRI, uses
classical spectral estimation techniques such as Prony's method, ESPRIT and Cadzow denoising
to recover the delay positions in frequency domain.  The method is computationally simple,
and our simulations demonstrate excellent performance in practical scenarios.
The prposed SCS-FRI algorithm applies immediately to channel estimation in multi-output OFDM communication with contiguous or uniformly scattered DFT pilots. Interestingly enough it can be used on other modulation schemes provided a suitable pilot layout. The Walsh-Hadamard transform (WHT), used in CDMA downlink channel among others, qualifies if one controls the pilots layout in the WHT domain.

We also derive a simple scalar formula for the Cram\'er-Rao bound on the estimation of separable ToAs,
and also point to a more general result by Yau and Bresler \cite{Yau1992}. Both bounds are extended to Rayleigh fading SCS channels to lower bound the expected estimation error in fading conditions.
Our simulations indicate the proposed SCS-FRI method is close to this bound at high SNRs.

%Communications with multiple inputs and outputs have come from theory \cite{Telatar1999,Foschini1998} to practice \cite{IEEE802.11n}. Two prominent features contributed to its widespread adoption: diversity gain and spatial multiplexing gain. While diversity gain may be achieved without channel knowledge at the transmitter for some setups \cite{Alamouti1998}, full exploitation of multiplexing gain requires a feedback of the channel state to the transmitter \cite{Gesbert2003}. There is a well-known trade-off between these two gains \cite{Zheng2003}, and techniques approaching this trade-off curve generally rely on feedback about the state of the channel. Therefore channel modelization and estimation becomes a central problem. A good channel impulse response (CIR) model may not only improve the accuracy of the estimation but also its compressibility, which is relevant to limit the amount of information to be fed back in MIMO systems using the channel state estimate at the transmitter. Estimation and compression are subordinated to the existence of a CIR model capturing its main features while having as few degrees of freedom (DoF) as possible.

\subsection{SCS MIMO models}

Due to the physical properties of outdoor electromagnetic propagation, wireless channels are often
modeled as having a channel impulse response (CIR) that is
sparse in the sense that they contain few significant paths \cite{Rappaport2002}.
With multiple antennas, the CIR measured at different antennas share a common support, i.e. the times of arrival (ToA) at different antennas are similar while the paths amplitudes and phases
are distinct. This sparse common support (SCS) channels is illustrate it in Figure~\ref{fig:SCSMIMO}. The SCS channel model is usually assumed in the literature, though its physical motivations are not always put forth.

It is important to note that the sparsity and common support assumptions only hold with respect to the channel bandwidth $B$ and the SNR of the channel. Indeed, in the presence of noise, resolution is limited by the inverse bandwidth $1/B$, even if one knows exactly which parametric model the signal obeys. In practice, $1/10^{\text{th}}$ of the inverse bandwidth is a reasonable resolution to shoot for. The limited resolution has the effect of clustering paths from a single scatterer into a single path (promoting sparsity), and the small shift in the ToA due to the distance between antennas becomes negligible (promoting common support). Table~\ref{tabl:invB} gives the channel bandwidth of several modern standards and $c/B$ which is the distance travelled by an electromagnetic wave in a time lapse equal to the inverse bandwidth.

\begin{table}[t]\centering  \ra{1.3}\caption{Channel bandwidth in popular wireless systems}
\begin{tabular}{@{}llrr@{}}\toprule
\textbf{System} & \textbf{Code}& \textbf{Bandwidth} $B$ & \begin{array}{r}\textbf{Resolvable}\vspace{-8pt}\\ \textbf{distance } $c/B$\end{array} \\\midrule
DVB-T \cite{ETSI2001}& DFT & $5$--$8$ MHz & $38$--$60$ m\\
IS-95  \cite{Garg2000}& WHT& $1.25$ MHz & $240$ m\\%
3GPP LTE \cite{Dahlman2008}& DFT & $1.4$--$20$ MHz & $15$--$215$ m\\   %
UWB & --- &  $>500$ MHz & $<60$ cm\\
\bottomrule
\end{tabular} \label{tabl:invB}\vspace{-4mm}
\end{table}

\subsection{Related work}
In OFDM systems, the majority of commercial channel estimators often simply perform 
some form of linear filtering or interpolation of the pilot symbols \cite{RanaKC:10,KarakayaAC:08}.
Such non-parametric techniques are computationally very simple, but fundamentally 
cannot exploit the common sparsity in MIMO channel models. Since the phases and magnitudes are generally independent on the paths on different antenna pairs, the frequency response of sparse common support (SCS) channels are not correlated in any simple manner that can be exploited by basic linear interpolation of pilots.

A different line of work has proposed compressed sensing based 
methods for sparse channel estimation \cite{ParedesAW:07,TaubockH:08,HauptBRN:10,Schniter:10}.
In the compressed sensing context, the SCS property is equivalent to \emph{joint} or
\emph{group sparsity} for which there are several methods including group LASSO 
\cite{YuanLin:06,ZhaoRY:08}, group OMP \cite{LozanoSA:08} and belief propagation
\cite{KimCJBY:11}.  Techniques for mixes of joint and individual sparsity are considered in
\cite{Wakin2005,Duarte2005}. All of these compressed sensing methods, however, require
that the delay locations are discretized and exact sparsity is achieved only when the true path
locations fall exactly on one of the discrete points.  With continuous value
path locations, each path components will require a number of terms to approximate well,
or demand a larger number of dictionary elements to offer a finer discretization.

Another joint estimation problem with FRI signals is studied in \cite{Gedalyahu2010}.
\begin{table}[t!]\centering  \ra{1.3}\caption{Channel estimation methods are naturally classified in terms of the channel properties they exploit. }
\begin{tabular}{@{}c@{ }cr@{}cccc@{}}\toprule
& & \textbf{Algorithm} & \phantom{a} &  \multicolumn{3}{c}{\textbf{Exploited channel properties}} \\\cmidrule{5-7}
& & & & Short & & Common \\
& & & & delay-spread & Sparsity & support\\\midrule
\multirow{7}{*}{\centering\begin{sideways}\textbf{DFT pilots layout}\end{sideways}}&\multirow{3}{*}{\centering\begin{sideways}Contiguous\ \end{sideways}} & Lowpass & & \checkmark & &  \\
& &FRI & & & \checkmark &  \\
& &SCS-FRI & & & \checkmark & \checkmark \\ \\
& \multirow{3}{*}{\centering\begin{sideways}\centering Scattered\ \ \end{sideways}} &Lowpass & & \checkmark & &  \\
& &FRI & &\checkmark & \checkmark &  \\
& &SCS-FRI & &\checkmark & \checkmark & \checkmark \\
\bottomrule
\end{tabular}\vspace{-4mm} \label{tabl:prop}
\end{table}
\subsection{Contributions}
The contributions of this work are four-folds:
\begin{itemize}
\item Extension of classical FRI sampling and estimation to multiple SCS channels (Section~\ref{sec:theory})
\item Derivation of simple scalar formulas for the CRB of SCS channels (Section~\ref{sec:CRB})
\item Application to OFDM and Walsh-Hadamard coded (e.g. CDMA downlink) communications with contiguous or uniformely scattered DFT pilots (Section~\ref{sec:ofdm})
\item Characterization of a precise spatial analytical model for SCS channels (Section~\ref{sec:application})
\end{itemize}

The proposed SCS-FRI algorithm stands out compared to FRI or lowpass interpolation as it exploits more channel properties, as indicated in Table~\ref{tabl:prop}. Lowpass based techniques are a sensible non-parametric approach as they exploit the short delay-spread property. In general, any estimation technique based on uniformly scattered DFT pilots uses this property, as it is a necessary condition to the unicity of the solution.

We conclude our study with numerical simulations showing the efficiency of the SCS-FRI algorithm in a Rayleigh fading scenario, and compare its equalization gain to a standard non-parametric approach, i.e. lowpass interpolation in the DFT domain.

% The very first letter is a 2 line initial drop letter followed
% by the rest of the first word in caps.
%
% form to use if the first word consists of a single letter:
% \IEEEPARstart{A}{demo} file is ....
%
% form to use if you need the single drop letter followed by
% normal text (unknown if ever used by IEEE):
% \IEEEPARstart{A}{}demo file is ....
%
% Some journals put the first two words in caps:
% \IEEEPARstart{T}{his demo} file is ....
%
% Here we have the typical use of a "T" for an initial drop letter
% and "HIS" in caps to complete the first word.

\section{Sparse Common Support FRI: Theory and Algorithms}\label{sec:theory}

%%%%%%%%%%%%%%%%%%%%%%%%%%%%%%%%%%%%%%%%%%%%%%%%%%%%%%%%%%%%%%%%%%%%%%%%%%%%%%%%%%%%%%%%%%%%%%%%%%%%%%%%%%%%%%%%%%%%
%%%%%%%%%%%%%%%%%%%%%%%%%%%%%%%%%%%%%%%%%%%%%%%%%%%%%%%%%%%%%%%%%%%%%%%%%%%%%%%%%%%%%%%%%%%%%%%%%%%%%%%%%%%%%%%%%%%%
\subsection{Problem formulation}
We consider the physical setup described in Figure~\ref{fig:SCSMIMO}.(a). A periodic signal of limited bandwidth is transmitted over a multipath channel and uniformly sampled by a receiver with $P$ antennas. This leads to $P$ parallel multipath channels as shown in Figure~\ref{fig:SCSMIMO}.(b). The channels either share a common support exactly, in which case they are called exact SCS channels, or approximately, in which case they are called SCS channels (e.g. Figure~\ref{fig:SCSMIMO}.(b)).

Consider $P$ exact SCS channels shaped by a kernel $\varphi$, with complex baseband equivalent model:
\begin{equation}\label{eq:mpchannelth}
 h_p(t)=\sum_{k=1}^K c_{k,p}\varphi(t-t_k)\ ,\quad c_{k,p}\in\Cspace,\ t_k\in [0\ \tau[\ ,
\end{equation}
where $\varphi(t)$ is the $\tau$-periodic sinc function or Dirichlet kernel:
\begin{equation}
\label{chap2equ:DirichletKernel}
  \varphi(t) = \sum_{k \in \mathbb{Z}} \text{sinc}(B(t-k\tau)) = \frac{\sin(\pi B t)}{B\tau \sin(\frac{\pi t}{\tau})}\,.
\end{equation}

The kernel $\varphi$ is considered periodic as the filtering of a periodically padded signal by a linear shift invariant filter. Therefore, linear convolution of the CIR with the shaping kernel becomes circular.  

We assume that the bandwidth parameter $B$ satisfies $B = (2M+1)/\tau$ for $M \geq K$. The paths coefficients $c_{k,p}$ are treated as complex random variables. $N$ measurements $y_p[n]$ are acquired at a rate $1/T=N/\tau$ (with $\tau$ the signal period and $N\geq B\tau=2M+1$) and corrupted by AWGN
\begin{equation}y_p[n]=h_p[n]+q_p[n]\ \quad n\in\lbrace 0,\dots,N-1\rbrace , \label{eq:spl}\end{equation}
 where $\vct q_p \sim\mathcal N_\Cspace(\vct 0,\sigma^2\mathbb I)$ if the channel is complex-valued or $\vct q_p \sim\mathcal N(\vct 0,\sigma^2\mathbb I)$ if real-valued.
In the DFT domain, the received signal is:
\begin{equation}\widehat{y}_p[m]=\widehat{\varphi}[m]\cdot\sum_{k=1}^K c_{k,p}W^{mt_k}+\widehat{q}_p[m].\label{eq:spldft}\end{equation}
where $W=e^{-2\pi j /\tau}$ and $\widehat{\varphi}[m] = 1/(2M+1) $ for $|m| \leq M$ and is zero otherwise. The goal is to estimate the support $\lbrace t_k\rbrace_{k=1\dots K}$ and the paths amplitudes $\lbrace c_{k,p}\rbrace_{k=1\dots K,p=1\dots P}$ from the $NP$ samples collected in (\ref{eq:spl}). Once the support is known, estimation of the path amplitudes is simple linear algebra as seen in (\ref{eq:spldft}).

\subsection{Support recovery from baseband DFT coefficients}%\label{sec:theory}
We start from (\ref{eq:spldft}). The DFT samples $\widehat{y}_p[m]$ in the baseband ($|m|\leq M$) are the DFT coefficients of the channel corrupted by some Gaussian noise.

The noiseless DFT coefficients of a $K$-multipath channel have the well-known and interesting property to form a linear recurrent sequence of order $K+1$, i.e., any coefficient $\widehat h_p[m]$ ($m\geq -M+K$) can be expressed as a unique linear combination of the $K$ previous DFT coefficients common to all indices $m$:

\begin{lemma}\label{lemma:recseq}
Given $\widehat{h}_p[m]=\sum_{k=1}^K c_{k,p}W^{mt_k}$ for $m=-M+K,\dots, M$ and $t_i\neq t_j,\ \forall i\neq j$,
there exists  a unique set of coefficients $\lbrace f_k\rbrace_{k=1,\dots,K}$ such that:
$$\widehat{h}_p[m]=f_1\widehat{h}_p[m-1]+f_2\widehat{h}_p[m-2]+\dots+f_K\widehat{h}_p[m-K]$$
where $x^K-f_1x^{K-1}-\dots -f_{K-1}x-f_K$ is the polynomial with roots $\lbrace W^{t_k}\rbrace_{k=1,\dots,K}$.
\end{lemma}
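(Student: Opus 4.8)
The plan is to recognize this as the classical annihilating-filter (Prony) identity and to treat existence and uniqueness separately. First I would introduce the monic polynomial $A(x)=\prod_{k=1}^{K}\left(x-W^{t_k}\right)=x^{K}-f_1x^{K-1}-\dots-f_{K-1}x-f_K$, which by inspection defines the candidate coefficients $f_k$ as the (signed) elementary symmetric functions of the $W^{t_k}$. A preliminary observation, on which the whole argument rests, is that the roots $W^{t_k}=e^{-2\pi j t_k/\tau}$ are distinct: the map $t\mapsto e^{-2\pi j t/\tau}$ is injective on $[0,\tau[$, so $t_i\neq t_j$ forces $W^{t_i}\neq W^{t_j}$.

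For existence, I would substitute the explicit form of $\widehat{h}_p[m]$ into the recurrence and collect terms path by path. For any $m$ with $m-K\ge -M$, i.e. $m\ge -M+K$, factoring $W^{(m-K)t_k}$ out of each path yields
\[
\widehat{h}_p[m]-\sum_{j=1}^{K}f_j\,\widehat{h}_p[m-j]=\sum_{k=1}^{K}c_{k,p}\,W^{(m-K)t_k}\,A\!\left(W^{t_k}\right).
\]
Since every $W^{t_k}$ is a root of $A$, each summand vanishes and the recurrence holds on exactly the stated index range. This step is routine algebra; the only things to verify are the book-keeping of exponents and that the admissible range of $m$ is precisely $\{-M+K,\dots,M\}$.

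For uniqueness I would show that \emph{any} admissible $\{f_k\}$ must coincide with the ones above. Given a candidate set, form its characteristic polynomial $P(x)=x^{K}-\sum_{j=1}^{K}f_j x^{K-j}$; the same computation rewrites the residual of the recurrence at index $m$ as $\sum_{k=1}^{K}c_{k,p}\,P\!\left(W^{t_k}\right)W^{(m-K)t_k}$. Requiring this to vanish for all $m\in\{-M+K,\dots,M\}$ gives $2M-K+1$ linear equations in the quantities $\beta_k:=c_{k,p}P\!\left(W^{t_k}\right)$. Because $2M-K+1\ge K$ (as $M\ge K$) and the nodes $W^{t_k}$ are distinct, any $K$ of these equations form an invertible Vandermonde system, forcing $\beta_k=0$ for every $k$. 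Assuming the amplitudes $c_{k,p}$ are nonzero, this gives $P\!\left(W^{t_k}\right)=0$ for all $k$, so the monic degree-$K$ polynomial $P$ has the $K$ distinct roots $W^{t_k}$ and hence equals $A$; matching coefficients then fixes the $f_k$ uniquely.

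The main obstacle is the uniqueness half: I must ensure the index window supplies at least $K$ equations and invoke the Vandermonde structure correctly, and I should flag the mild nondegeneracy hypothesis $c_{k,p}\neq 0$ that is implicit in asserting a recurrence of order exactly $K$ (if some amplitude vanishes the effective number of exponentials, and thus the minimal order, drops). The existence half, by contrast, is a direct verification once $A$ is introduced.
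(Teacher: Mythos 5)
Your proof is correct, and it is more careful than the paper's own argument, which is noticeably sketchier. The paper proves the lemma by invoking the classical structure theory of linear recurrent sequences: a root $\lambda$ of the characteristic equation yields the geometric solution $\lambda^n$ (multiply the characteristic equation by $\lambda^{n-K}$), linear combinations of solutions are solutions, and uniqueness is then asserted in one line by saying $\lbrace f_k\rbrace$ is ``uniquely defined by a set of $K$ independent linear equations'' --- effectively reading uniqueness as the triviality that a monic degree-$K$ polynomial is determined by its $K$ distinct roots. Your existence step is the same fact presented as a direct factorization identity, $\widehat{h}_p[m]-\sum_j f_j\widehat{h}_p[m-j]=\sum_k c_{k,p}W^{(m-K)t_k}A\bigl(W^{t_k}\bigr)=0$, with the index window checked explicitly; this is equivalent in content but self-contained. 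Where you genuinely diverge is uniqueness: you prove the stronger statement that \emph{any} order-$K$ filter annihilating the data must equal $A$, by reducing to a Vandermonde system over the $2M-K+1\geq K$ available indices with distinct nodes $W^{t_k}$. That argument is complete where the paper's is hand-waved, and it surfaces a hypothesis the paper silently assumes: uniqueness in this stronger sense requires $c_{k,p}\neq 0$ for all $k$ (if an amplitude vanishes, the minimal recurrence order drops and infinitely many monic degree-$K$ filters annihilate the sequence). The trade-off is that the paper's route is shorter and leans on standard facts, while yours makes the bookkeeping and the nondegeneracy condition explicit --- a condition that matters downstream, since Propositions~\ref{prop:af} and~\ref{prop:lr} implicitly rely on the channel having exactly $K$ active paths.
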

\begin{proof}
A linear recursion of degree $K$ can be written as:
\begin{equation}x_n=f_1x_{n-1}+\dots +f_Kx_{n-K},\ f_K\neq 0.\label{eq:linrec}\end{equation}
Its characteristic equation is:
\begin{equation}x^K-f_1x^{K-1}-\dots -f_{K-1}x-f_K=0.\label{eq:chareq}\end{equation}
If $\lambda_x$ is a solution of (\ref{eq:chareq}) then multiplying both sides of the equation by $\lambda_x^{n-K}$ ($\neq 0$ since $f_K\neq 0$) shows that $\lambda_x^{n}$ is a solution of (\ref{eq:linrec}).
Moreover by linearity, any linear combination of solutions of (\ref{eq:linrec}) is still a solution, and if (\ref{eq:chareq}) has $K$ distinct solutions, $\lbrace f_k\rbrace_{k=1,\dots,K}$ is uniquely defined by a set of $K$ independent linear equations.
Hence, for $\sum_{k=1}^K c_{k,p}W^{mt_k}$ ``solution'' of (\ref{eq:linrec}), $t_k\not\equiv t_l \text{ mod }\tau$ for all $\ k\neq l$, there exists a unique set $\lbrace f_k\rbrace_{k=1,\dots,K}$ such that $\lbrace W^{t_k}\rbrace_{k=1,\dots,K}$ are the $K$ distinct roots of
$$x^K-f_1x^{K-1}-\dots -f_{K-1}x-f_K.$$
\end{proof}

The coefficients $\widehat{y}_p[m]$ maybe arranged in a tall block-Toeplitz matrix 
$$\mat H^{(L)}=\begin{bmatrix}\mat H_1^{(L)}\\ \mat H_2^{(L)}\\ \vdots \\ \mat H_P^{(L)}\end{bmatrix}$$
 such that
\begin{equation}\mat H_p^{(L)}=\begin{bmatrix}
\widehat y_{p,L-M-1} &\widehat y_{p,L-M-2} & \cdots &\widehat y_{p,-M}\\
\widehat y_{p,L-M} &\widehat y_{p,L-M-1} & \cdots &\widehat y_{p,1-M}\\
\vdots &\vdots & \ddots & \vdots \\
\widehat y_{p,M} &\widehat y_{p,M-1} & \cdots &\widehat y_{p,M-L+1}
\end{bmatrix},\label{eq:H}
\end{equation}
where $\widehat y_{i,j}=\widehat y_i[j]$. The \emph{data matrix} $\mat H^{(L)}$ is made of $P$ Toeplitz blocks of size $(2M+2-L)\times L$, and we assume that both block dimensions are larger or equal to $K$. It possess interesting algebraic properties which form the core of line spectra estimation techniques. We will use Lemma~\ref{lemma:recseq} to show three well-known spectral estimation tools which extend straightforwardly from Topelitz data matrices to block-Toeplitz ones, i.e. extend from single output to multiple outputs with SCS. We do so, with two propositions:
\begin{proposition}\label{prop:af}\textbf{[Annihilating filter property]}\\
In the absence of noise ($\widehat y_p[m]=\widehat h_p[m]$), a set of exact SCS channels with $K$ distinct paths verifies
\begin{equation}\mat H^{(K+1)} \vct f =\vct 0,\label{eq:af}\end{equation}
where $\vct f=[1\ -f_1\ \cdots\ -f_K]^T$ are the annihilating filter coefficients such that the polynomial
$p_{\vct f}(z)=1-\sum_{k=1}^K f_k z^{k}$
has $K$ roots $\lbrace e^{-2\pi j t_k/\tau}\rbrace_{k=1\dots K}$. The matrix $\mat H^{(K+1)}$ is built with blocks as in (\ref{eq:H}) (with $L=K+1$).
\end{proposition}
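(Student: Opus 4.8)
The plan is to verify the identity directly, reducing the block statement to $P$ scalar recurrences and then invoking Lemma~\ref{lemma:recseq} rowwise. Since $\mat H^{(K+1)}$ is the vertical stacking of the blocks $\mat H_p^{(K+1)}$, the equation $\mat H^{(K+1)}\vct f=\vct 0$ holds if and only if $\mat H_p^{(K+1)}\vct f=\vct 0$ for every antenna $p$ separately. The conceptual reason a single $\vct f$ can annihilate all blocks simultaneously is exactly the common-support hypothesis: each sequence $\widehat h_p[\cdot]$ is a linear recurrent sequence whose characteristic roots are the shared $\lbrace W^{t_k}\rbrace$, so by Lemma~\ref{lemma:recseq} the recurrence coefficients $\lbrace f_k\rbrace$ depend only on the common delays and not on the antenna-dependent amplitudes $c_{k,p}$. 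One vector $\vct f$ therefore serves all $P$ blocks, which is what motivates the block-Toeplitz arrangement in the first place.

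Next I would read off the rows of a fixed block. With $L=K+1$, the top-left entry of $\mat H_p^{(K+1)}$ in (\ref{eq:H}) is $\widehat h_p[K-M]$ and the bottom-left entry is $\widehat h_p[M]$, so the $i$-th row is $[\widehat h_p[m_i]\ \widehat h_p[m_i-1]\ \cdots\ \widehat h_p[m_i-K]]$ with $m_i=(K-M)+(i-1)$ ranging over $m\in\lbrace K-M,\dots,M\rbrace$ as $i$ runs over the $2M+1-K$ rows. Forming the inner product with $\vct f=[1\ -f_1\ \cdots\ -f_K]^T$ yields exactly $\widehat h_p[m_i]-f_1\widehat h_p[m_i-1]-\cdots-f_K\widehat h_p[m_i-K]$, which is the left-minus-right side of the recurrence in Lemma~\ref{lemma:recseq}. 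The hypothesis $B=(2M+1)/\tau$ with $M\geq K$ ensures every row index satisfies $m_i\geq -M+K$, so the lemma applies to each row and every entry of $\mat H_p^{(K+1)}\vct f$ vanishes, giving $\mat H^{(K+1)}\vct f=\vct 0$.

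Finally I would identify $\vct f$ with the stated annihilating polynomial: the coefficients $\lbrace f_k\rbrace$ furnished by Lemma~\ref{lemma:recseq} are those of the characteristic polynomial $x^K-f_1x^{K-1}-\dots-f_K$ with roots $\lbrace W^{t_k}\rbrace$, and $p_{\vct f}(z)=1-\sum_{k=1}^K f_k z^{k}$ is its reversal $z^K x^{-K}$-version, so its roots are determined by the same delay set $\lbrace t_k\rbrace$ (up to the sign convention in $W=e^{-2\pi j/\tau}$). I do not expect a genuine obstacle here, since the whole argument is a direct application of Lemma~\ref{lemma:recseq}. The only point needing care is the index bookkeeping: confirming that the Toeplitz shift in (\ref{eq:H}) places the first and last rows at $m=K-M$ and $m=M$, and that the assumed block dimensions ($\geq K$, guaranteed by $M\geq K$) leave a nonempty, consistent set of rows on which the recurrence is valid.
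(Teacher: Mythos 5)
Your proof is correct and takes the same route as the paper, whose entire proof is the one-line remark that the proposition ``is a direct consequence of Lemma~\ref{lemma:recseq}''; you simply fill in the routine details (block decomposition into the $P$ per-antenna equations, row-by-row inner products, and the index range $m\in\lbrace K-M,\dots,M\rbrace$ matching the lemma's hypothesis). Your parenthetical caveat about the sign convention is also well taken: strictly, the reversal of the characteristic polynomial $x^K-f_1x^{K-1}-\dots-f_K$ with roots $W^{t_k}=e^{-2\pi j t_k/\tau}$ places the roots of $p_{\vct f}(z)=1-\sum_k f_k z^k$ at the reciprocals $e^{+2\pi j t_k/\tau}$, so the sign in the proposition's root set is a typo in the paper rather than a flaw in your argument.
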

\begin{proof}
This is a direct consequence of Lemma \ref{lemma:recseq}.
\end{proof}
\begin{proposition}\label{prop:lr}
For a set of exact SCS channels with $K$ distinct paths and in the absence of noise, $\mat H^{(L)}$ satisfies
$$\text{rank}\ \mat H^{(L)}= K,$$
for $K\leq L\leq 2M+2-K$.
\end{proposition}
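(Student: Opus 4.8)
The plan is to expose the Vandermonde structure hidden in each Toeplitz block and then explain why stacking the $P$ blocks restores the full rank $K$. Working in the noiseless regime where $\widehat{y}_p[m]=\widehat{h}_p[m]$, write $u_k=W^{t_k}=e^{-2\pi j t_k/\tau}$, so the $(i,j)$ entry of a block is $\widehat{h}_p[L-M-1+i-j]=\sum_{k=1}^K c_{k,p}\,u_k^{\,L-M-1+i-j}$. First I would split the exponent as $(L-M-1)+(i-1)-(j-1)$, which factors the sum term by term and yields, for each $p$, the factorization
$$\mat H_p^{(L)} = \mat A\,\mat D_p\,\mat B,$$
where $\mat A$ is the $(2M+2-L)\times K$ Vandermonde matrix with $(i,k)$ entry $u_k^{i-1}$, the matrix $\mat B$ is the $K\times L$ Vandermonde matrix with $(k,j)$ entry $u_k^{-(j-1)}$, and $\mat D_p=\mathrm{diag}\bigl(c_{1,p}u_1^{L-M-1},\dots,c_{K,p}u_K^{L-M-1}\bigr)$. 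This step is just the computation already underlying Lemma~\ref{lemma:recseq} and is the routine part.

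Next I would assemble the $P$ blocks. Stacking vertically and pulling the common factor $\mat A$ through a Kronecker product gives
$$\mat H^{(L)} = (\eye{P}\krop\mat A)\,\mat D\,\mat B,\qquad \mat D=\begin{bmatrix}\mat D_1\\ \vdots\\ \mat D_P\end{bmatrix}.$$
The two outer factors are rank-preserving on the appropriate side. Because the $t_k$ are distinct in $[0,\tau)$, the nodes $u_k$ are distinct; together with $2M+2-L\geq K$ this makes $\mat A$ full column rank $K$, so $\eye{P}\krop\mat A$ is injective. Dually, $L\geq K$ makes $\mat B$ full row rank $K$, hence surjective. Consequently $\mathrm{rank}\,\mat H^{(L)}=\mathrm{rank}\,\mat D$. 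This is exactly where the two inequalities $K\leq L\leq 2M+2-K$ of the statement enter, one securing each Vandermonde factor.

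It remains to compute $\mathrm{rank}\,\mat D$, and this is the only place the \emph{common support} hypothesis is used. The $k$-th column of $\mat D$ is supported on the disjoint index set $\{k,\,K+k,\,\dots,\,(P-1)K+k\}$, carrying the values $c_{k,1}u_k^{L-M-1},\dots,c_{k,P}u_k^{L-M-1}$. Columns with different $k$ therefore have disjoint supports and are automatically linearly independent whenever each is nonzero. Since $u_k\neq 0$, column $k$ vanishes only when $c_{k,p}=0$ for every $p$, i.e. only when the $k$-th path is absent from all channels; but a set of exact SCS channels \emph{with $K$ distinct paths} has, by definition, every $t_k$ active in at least one $h_p$. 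Hence all $K$ columns are nonzero, $\mathrm{rank}\,\mat D=K$, and therefore $\mathrm{rank}\,\mat H^{(L)}=K$.

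The main obstacle is conceptual rather than computational: one must recognize that a single block $\mat H_p^{(L)}$ may be rank-deficient — its rank is only the number of paths actually present in channel $p$ — and that it is precisely the common-support stacking, encoded by the disjoint-support columns of $\mat D$, that lifts the rank back to the full $K$. Making explicit the ``every path is active in some channel'' condition, which is implicit in the phrase ``$K$ distinct paths,'' is the step most easily glossed over.
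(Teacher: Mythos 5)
Your proof is correct, and it takes a genuinely different route from the paper's. The paper argues the two bounds separately: for $\mathrm{rank}\,\mat H^{(L)}\geq K$ it expands the top-left $K\times K$ minor as $\sum_{k} c_{k,\cdot}W^{(M+1-L)t_k}\vct\xi_k^\ast\vct\xi_k$ with pairwise non-colinear Vandermonde-type vectors $\vct\xi_k$, and for $\mathrm{rank}\,\mat H^{(L)}\leq K$ it invokes Lemma~\ref{lemma:recseq}: every row of $\mat H^{(L)}$ obeys the \emph{same} order-$K$ recursion (the common annihilating filter), so any row agreeing with a combination of $K$ independent rows on its first $K$ entries agrees with it everywhere. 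Your factorization $\mat H^{(L)}=(\eye{P}\krop\mat A)\,\mat D\,\mat B$ delivers both bounds in one stroke and cleanly localizes each hypothesis: distinct $t_k$ give distinct Vandermonde nodes, the two inequalities $K\leq L\leq 2M+2-K$ make $\mat A$ injective and $\mat B$ surjective, and the stacking reduces everything to $\mathrm{rank}\,\mat D=K$. Your version is in fact more careful on a point the paper glosses over: the paper's minor lies entirely inside the first block, so its lower-bound argument implicitly needs every $c_{k,1}\neq 0$; if a path is absent from channel $1$ but present elsewhere, that minor is rank-deficient, whereas your disjoint-support argument for $\mat D$ only requires each path to be active in at least one of the $P$ channels --- exactly the right condition, which you rightly make explicit. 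What the paper's route buys in exchange is thematic economy: it derives the rank property directly from the annihilating-filter lemma that powers Block-Prony, while your factorization instead exhibits the common column space $\range{\mat A}$ shared by all blocks, which is precisely the structure Block-ESPRIT exploits. One small quibble with your commentary: the common-support hypothesis is not used \emph{only} in computing $\mathrm{rank}\,\mat D$; it is what permits the same $\mat A$ and $\mat B$ (same nodes $u_k$) to be factored out of every block in the first place --- without it there would be no common right factor $\mat B$ to pull out, and the rank could exceed $K$.
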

\begin{proof}
Let $\widetilde{\mat H}$ be the top-left $K\times K$ minor of $\mat H^{(L)}$. It can be written as the sum of $K$ rank-1 matrices:
$$\widetilde{\mat H}=\sum_{k=1}^K c_{k,\cdot} W^{(M+1-L)t_k}\vct \xi_k^\ast\vct\xi_k,$$
such that $\vct \xi_k=[1\ W^{t_k}\ W^{2t_k}\ \cdots \ W^{(K-1)t_k}]$. If $t_k\not\equiv t_l \text{ mod }\tau$ for all $\ k\neq l$, then $\lbrace\vct\xi_1,\ \dots,\ \vct\xi_K\rbrace$ form a set of non colinear vectors. Therefore  $\text{rank}\ \mat H^{(L)} \geq K$.

Choose $K$ mutually independent rows of $\mat H^{(L)}$. From  Lemma \ref{lemma:recseq}, truncating these row vectors to length $K$ preserves the linear independence. Therefore, given a row $\vct h =[h[0],\dots,h[L-1]]$ of $\mat H^{(L)}$, there exists a linear combination of these $K$ rows $\vct h^{\prime}$ such that
$$h[k]=h^\prime[k]\ , \quad k=0,\dots,K-1.$$
By construction, since all rows of $\mat H^{(L)}$ verify the the same linear recursion of degree $K$, $\vct h$ and $\vct h^\prime$ verifies this recursion. Hence $\vct h=\vct h^\prime$, implying $\text{rank}\ \mat H^{(L)}$ is at most $K$.
\end{proof}
\paragraph{Block-Prony algorithm} 
Proposition~\ref{prop:af} is Prony's method \cite{Prony1795,Tufts1982} for block-Toeplitz matrices.
We call the corresponding algorithm ``Block-Prony TLS'', listed under Algorithm~\ref{algo:prony}. It solves the annihilating filter equation (\ref{eq:af}) in the total least-square (TLS) sense. The crucial step is the identification of what shall be the unidimensional null space of $\mat H^{(K+1)}$ in a noiseless case. Solving this problem in the TLS sense yields the least right singular vector of $\mat H^{(K+1)}$. Prony's method is notoriously sensitive to noise, which is to be expected as the result relies on identification of the unidimensional complement of the $K$-dimensional signal space. This sensitivity can be mitigated with prior denoising of the measurements.
%%%%%%%%%%%%%%%%%%%%%%%%%%%%%%%%%%%%%%%%%%%%%%%%%%%%%%%%%%%%%%%%%%%%%%%%%%%%%%%%%%%%%%%%%%%%%%%%%%%%%%%%%%%%%%%%%%%%

\begin{algorithm}[t]
\caption{Block-Prony\_TLS}
\begin{algorithmic}[1]
\REQUIRE An estimate on the number of effective paths $K^{\text{est}}$, $2M+1\ (M\geq K)$ channel DFT coefficients $\widehat y_p[m]=\sum_{k=1}^K c_{k,p}W_N^{mt_k}+ \widehat q_p[m]$ for $|m|\leq M$, $p=1\dots P$.
\STATE Build $\mat H^{(K^{\text{est}}+1)}$ according to (\ref{eq:H}).
%\STATE $\mat U,\ [\sigma_1,\dots,\sigma_{K^{\text{est}}+1}],\ \mat V\leftarrow\texttt{svd}\left(\mat H^{(K^{\text{est}}+1)}\right)$, such that $\sigma_1\geq\dots\geq\sigma_{K^{\text{est}}+1}$.
%\STATE $\vct f \leftarrow \vct v_{K^{\text{est}}+1}$, the last column of $\mat V$.
\STATE Compute the SVD decomposition of the data matrix: $\mat H^{(K^{\text{est}}+1)}=\mat U\mat S\mat V^\ast$.
\STATE $\vct \phi \leftarrow\text{roots}(\vct v)$, such that $\vct v$ is the right singular vector associated to the least singular value.
\RETURN $\lbrace t_k^{\text{est}}\rbrace_{k=1\dots K^{\text{est}}} \leftarrow -\frac{\tau}{2\pi}\text{arg}\:\vct\phi$.
\end{algorithmic}\label{algo:prony}
\end{algorithm}
%%%%%%%%%%%%%%%%%%%%%%%%%%
\begin{algorithm}[t]
\caption{Block-ESPRIT\_TLS}
\begin{algorithmic}[1]
\REQUIRE An estimate on the number of effective paths $K^{\text{est}}$, $2M+1\ (M\geq K)$ channel DFT coefficients $\widehat y_p[m]=\sum_{k=1}^K c_{k,p}W_N^{mt_k}+ \widehat q_p[m]$ for $|m|\leq M$, $p=1\dots P$.
\STATE Build $\mat H^{(M)}$ according to (\ref{eq:H}).
%\STATE $\mat U,\ [\sigma_1,\dots,\sigma_{K^{\text{est}}+1}],\ \mat V\leftarrow\texttt{svd}\left(\mat H^{(K^{\text{est}}+1)}\right)$, such that $\sigma_1\geq\dots\geq\sigma_{K^{\text{est}}+1}$.
%\STATE $\vct f \leftarrow \vct v_{K^{\text{est}}+1}$, the last column of $\mat V$.
\STATE Compute the SVD decomposition of the data matrix: $\mat H^{(M)}=\mat U\mat S\mat V$.
\STATE Extract the signal subspace basis $\mat\Xi_0=\mat V_{1:(M-1),1:K^{\text{est}}}$.
\STATE Extract the rotated signal space basis $\mat\Xi_1=\mat V_{2:M,1:K^{\text{est}}}$.
\STATE Solve $\mat \Xi_1=\mat\Xi_0\mat\Psi$ in the TLS sense.
\RETURN $\lbrace t_k^{\text{est}}\rbrace_{k=1\dots K^{\text{est}}} \leftarrow \lbrace -\frac{\tau}{2\pi}\text{arg}\:\lambda_k\left(\mat\Psi\right)\rbrace_{k=1\dots K^{\text{est}}}$.
\end{algorithmic}\label{algo:esprit}
\end{algorithm}
%%%%%%%%%%%%%%%%%%%%%%%%%%
\begin{algorithm}[t!]
\caption{Block-Cadzow denoising}
\begin{algorithmic}[1]
\REQUIRE A block-Toeplitz matrix $\mat H^{(L)}$ and a target rank $K$.
\ENSURE  A block-Toeplitz matrix $\mat H^{(L)}$ with rank $\leq K$.
\REPEAT
\STATE Reduce $\mat H^{(L)}$ to rank $K$ by a truncated SVD.
\STATE Make $\mat H_p^{(L)}$ $p=1\dots P$, Toeplitz by averaging diagonals.
\UNTIL{convergence}
\end{algorithmic}\label{algo:blockcadzow}
\end{algorithm}
%%%%%%%%%%%%%%%%%%%%%%%%%%
\begin{algorithm}[h!]
\caption{SCS-FRI channel estimation}
\begin{algorithmic}[1]
\REQUIRE An estimate on the number of effective paths $K^{\text{est}}$, $2M+1\ (M\geq K)$ noisy channel DFT coefficients $\widehat y_p[m]=\sum_{k=1}^K c_{k,p}W_N^{mt_k}+ \widehat q_p[m]$ for $|m|\leq M$, $p=1\dots P$.
\ENSURE  Support estimate $\lbrace t_k^{\text{est}}\rbrace_{k=1\dots K^{\text{est}}}$
\STATE Build $\mat H^{(M)}$ according to (\ref{eq:H}).
\STATE $\mat H^{(M)}\leftarrow \text{Block-Cadzow}(\mat H^{(M)},K^{\text{est}})$ \emph{[optional]}.
\STATE Update $\widehat y_p[m]$ with the first row and column of the denoised block $\mat H_p^{(M)}$.
\STATE Estimate the common support with Block-Prony\_TLS or Block-ESPRIT\_TLS.
\STATE Estimate $\lbrace c_{k,p}\rbrace$  solving $P$ linear Vandermonde systems (\ref{eq:spldft}).
\end{algorithmic}\label{algo:csfri}
\end{algorithm}
%%%%%%%%%%%%%%%%%%%%%%%%%%%%%%%%%%%%%%%%%%%%%%%%%%%%%%%%%%%%%%%%%%%%%%%%%%%%%%%%%%%%%%%%%%%%%%%%%%%%%%%%%%%%%%%%%%%%
\paragraph{Block-ESPRIT algorithm} 
Proposition \ref{prop:lr} implies each block in the data matrix shares the same signal subspace. Hence the ESPRIT TLS algorithm outlined in \cite{Roy1989} applies as-is to the block-Toeplitz data matrix $\mat H^{(L)}$. The Block-ESPRIT TLS algorithm is outlined in Algorithm \ref{algo:esprit}.
 The Block-ESPRIT algorithm fulfils the same goal as the Block-Prony algorithm, but its essence is entirely different. Where Prony's method identifies a line with least energy in a $K$ dimensional space, ESPRIT finds the rotation between two $K$-dimensional subspaces in an $M$-dimensional space. What makes ESPRIT much more resilient to noise is that the two subspaces are computed from the most energetic part of the signal.

\paragraph{Block-Cadzow denoising} 
Proposition \ref{prop:lr} used together with the block-Toeplitz structure property yields the ``\emph{lift-and-project}'' denoising Algorithm \ref{algo:blockcadzow}, which we call  \emph{Block-Cadzow} denoising \cite{Hormati2010}. Using the same argument as in \cite{Cadzow1988}, the block-Cadzow algorithm provably converges. % For a single output, experiments indicate  $L$ shall be chosen to make the toeplitz blocks as square as possible ($L=M$) \cite{Blu2008}.

\paragraph{SCS-FRI}
We have all the elements to describe the SCS-FRI algorithm. 
The Block-Cadzow algorithm may be used to denoise the measurements and is followed by either Block-Prony or Block-ESPRIT estimation of the common ToAs (solved in the TLS sense).

\begin{figure*}[t]
\centering
\input{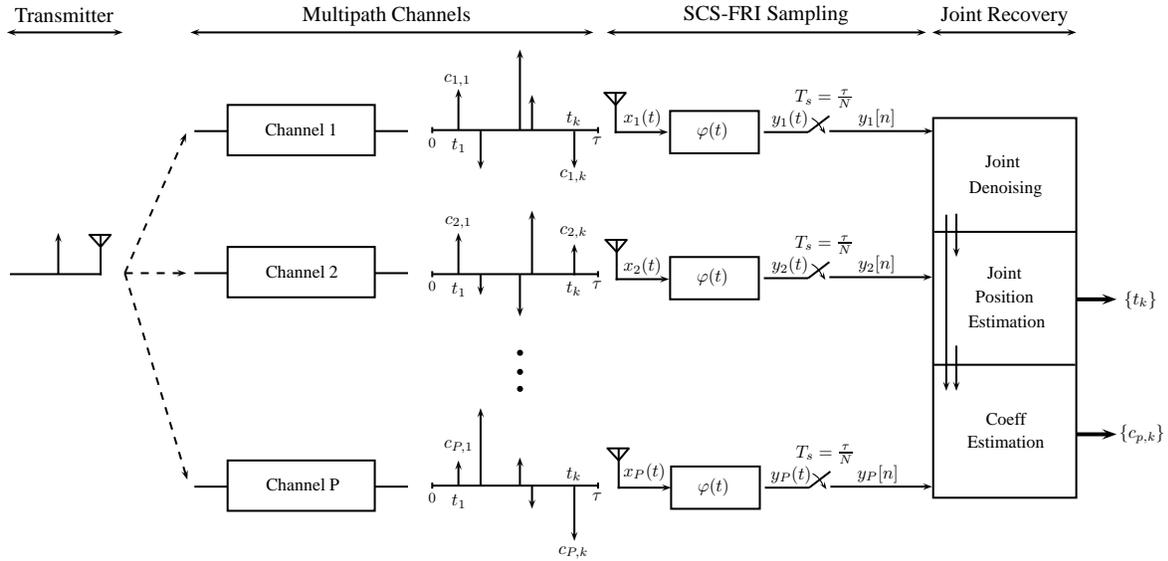}
\caption[]{The SCS-FRI sampling and reconstruction scheme in a multi-antenna channel estimation setting with $P$ receiving antennas.}\label{fig:sysbd}
\end{figure*}

For Cadzow denoising and ESPRIT, it is empirically found that a data matrix with square blocks works well.
The last step is to estimate the path amplitudes independently for each channel. This is done by solving a linear Vandermonde system (\ref{eq:spldft}) \cite{Vetterli2002}.
%\begin{equation}\label{eq:vdm} \mathrm{TBD}\end{equation}
The processing chain at the receiver is listed in Algorithm \ref{algo:csfri} and shown in Figure~\ref{fig:sysbd}. Combination of Cadzow and ESPRIT for estimation of a single OFDM channel is considered in \cite{Prabhu2009}.
We assumed the number of paths to be known. For estimation techniques of the number of paths, we refer to \cite{Roy1989}.

\section{Estimation theoretic bounds on SCS-FRI recovery}\label{sec:CRB}
\subsection{Deterministic multipath channels}

In~\cite{Blu2008}, the authors derive the Cram\'{e}r-Rao lower bound~\cite{Cramer1946, Rao1945} for estimating the positions and weights of the Diracs in FRI signals. Considering a single Dirac with deterministic amplitude in a single-channel real-valued scenario, the minimal relative uncertainties on the location of the Dirac, $t_1$, and on its amplitude, $c_1$, are given by
\begin{align*}
\label{chap2equ:singleframeCRB}
  \Esp{\left(\frac{\vartriangle\!\! t_1}{\tau}\right)^2}&\geq \frac{3(2M+1)}{4 \pi^2 N M(M+1)}\:\mathsf{PSNR}^{-1} \\
  \Esp{\left(\frac{\vartriangle\!\! c_1}{c_1}\right)^2}&\geq \frac{2M+1}{N}\:\mathsf{PSNR}^{-1} \nonumber
\end{align*}
where $\mathsf{PSNR} = c_1^2/\sigma^2$ is the input peak signal to noise ratio. When there are more than two Diracs, the Cram\'{e}r-Rao formula for one Dirac still holds approximately when the locations are sufficiently far apart\footnote{Empirically, the distance should be larger than $2/B$.}.

\subsection{Jointly Gaussian multipath channels}
We derive bounds on the support estimation accuracy with measurements taken according to (\ref{eq:spl}). %, where $\varphi$ is a sinc kernel of maximum bandwidth $B=1/T$.
The paths coefficients $c_{k,p}$ are assumed to be jointly Gaussian, and modeled as the product of $a_{k,p}=\Esp{\left|c_{k,p}\right|}$ by a standard normal random variable $Z_{k,p}$  having the following properties, consistent with the well-known Rayleigh-fading model:
\begin{itemize}
\item $Z_{k,p}\sim\mathcal N_\Cspace(0,\sqrt{1/2}\mathbb I)$.
\item Similar expected path amplitude between antennas: $a_k \eqdef a_{k,1}=a_{k,2}=\dots=a_{k,P}.$
\item Independence between paths: $\Esp{Z_{k,p}Z_{k^\prime,p^\prime}^\ast}=0,\ k\neq k^\prime.$
\item The random vector $\vct Z_k=[Z_{k,1} \cdots Z_{k,P}]^T$ is defined as
$\vct Z_k=\mat L_k \vct r$,
where $\mat L_k$ is the Cholesky factor of the covariance matrix $\mat R_k=\Esp{\vct Z_k\vct Z_k^\ast}$ and $\vct r$ is a vector of iid standard complex Gaussian random variables.
\end{itemize}
The Rayleigh-fading case can be seen as deterministic  if conditioned  on the path amplitudes. Thus, the Cram\'er-Rao bounds for random paths coefficients are random variables for which we can compute statistics. Expectation and standard deviation will respectively give the expected accuracy of the estimator and its volatility.
For a single path, and a symmetric or antisymmetric $\varphi$ (not necessarily a sinc kernel), the Cr\'amer-Rao bound has a concise closed form formula:
\begin{proposition}\label{prop:crbp}
With complex-valued measurements according to (\ref{eq:spl}), $K=1$, and $\vct Z_1$ be a random Gaussian vector , then
\begin{equation}\Esp{(\frac{\vartriangle\!\! t_1}{\tau})^2}\geq\frac{\Esp{\left(\vct Z_1^\ast \vct Z_1\right)^{-1}}}{2N\cdot \textsf{dSNR}},\end{equation}
where $\textsf{dSNR}=|a_1|^2\|\varphi^\prime(\vct nT-t_1)\|^2/(N\sigma^2)$ is the differential SNR and $\lambda_1,\cdots,\lambda_P$ are the eigenvalues of the covariance matrix $\mat R_1$ and $P>1$:
\begin{itemize}
\item Uncorrelated paths coefficients, $\lambda_1=\cdots=\lambda_P=1$:
\begin{equation*}
\Esp{\left(\vct Z_1^\ast \vct Z_1\right)^{-1}}=(P-1)^{-1}\ ,\quad P>1.
%\Var{\left(\vct Z_1^\ast \vct Z_1\right)^{-1}}=& (P-1)^{-2}(P-2)^{-1}\ ,\quad P>2.
\end{equation*}
\item Correlated path coefficients, such that $\lambda_1\neq\cdots\neq\lambda_P$:
\begin{equation*}
\Esp{\left(\vct Z_1^\ast \vct Z_1\right)^{-1}}=\sum_{p=1}^P (-\lambda_p)^{P-1}\frac{\mathrm{ln} \lambda_p}{\lambda_p}\prod_{p^\prime\neq p}\left( \lambda_{p^\prime} -\lambda_p\right)^{-1}
%&\qquad\qquad\qquad\qquad\qquad,\ \forall P>1. %\\
%\Esp{\left(\vct Z_1^\ast \vct Z_1\right)^{-2}}=&-\sum_{m=1}^P\frac{\mathrm{ln} \lambda_m}{\lambda_m^2}\prod_{m^\prime\neq m}(1 - \lambda_{m^\prime}/\lambda_m)^{-1}\\
%&\qquad\qquad\qquad\qquad\qquad,\ \forall P>2.
\end{equation*}
\end{itemize}
\end{proposition}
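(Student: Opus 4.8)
The plan is to treat the Rayleigh-fading problem as a family of \emph{deterministic} estimation problems indexed by the realization of the amplitude vector $\vct Z_1$, invoke the classical Cram\'er--Rao bound for each realization (as in Section~\ref{sec:CRB}-A), and then average over $\vct Z_1$ by the tower property of conditional expectation. First I would condition on $\vct Z_1$, so that the path weights $c_{1,p}=a_1 Z_{1,p}$ become known constants and the quantities to estimate are $t_1$ together with the $P$ nuisance amplitudes. Writing the Gaussian log-likelihood of the $NP$ samples in (\ref{eq:spl}) and differentiating $h_p[n]=c_{1,p}\varphi(nT-t_1)$ with respect to $t_1$ (which brings down $\varphi'$) and with respect to the real and imaginary parts of each $c_{1,p}$, I would assemble the Fisher information matrix. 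The decisive simplification comes from the hypothesis that $\varphi$ is symmetric or antisymmetric: then $\varphi\,\varphi'$ is odd and the inner product $\sum_n \varphi(nT-t_1)\varphi'(nT-t_1)$ vanishes, so the position row/column decouples from the amplitude block and the $(t_1,t_1)$ entry of the \emph{inverse} Fisher matrix equals the reciprocal of the $(t_1,t_1)$ Fisher entry, with no Schur-complement loss.

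Because the $P$ channels carry independent noise, their Fisher informations add, giving $[\mat J]_{t_1 t_1}=\tfrac{2}{\sigma^2}\|\varphi'(\vct nT-t_1)\|^2\sum_{p}|c_{1,p}|^2 = 2N\,\textsf{dSNR}\,\vct Z_1^\ast\vct Z_1$, after substituting $\sum_p|c_{1,p}|^2=|a_1|^2\,\vct Z_1^\ast\vct Z_1$ and the definition of $\textsf{dSNR}$. The conditional bound $\Esp{(\vartriangle t_1/\tau)^2\mid\vct Z_1}\ge (\vct Z_1^\ast\vct Z_1)^{-1}/(2N\,\textsf{dSNR})$, normalized exactly as the single-path deterministic bound of Section~\ref{sec:CRB}-A, then follows, and applying $\Esp{\cdot}$ over $\vct Z_1$ yields the stated inequality.

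It remains to evaluate $\Esp{(\vct Z_1^\ast\vct Z_1)^{-1}}$. Since $\vct Z_1=\mat L_1\vct r$ is zero-mean complex Gaussian with covariance $\mat R_1=\mat L_1\mat L_1^\ast$, I would diagonalize $\mat R_1=\mat U\mat\Lambda\mat U^\ast$; the law of the i.i.d.\ standard complex Gaussian vector being unitarily invariant, one obtains $\vct Z_1^\ast\vct Z_1\overset{d}{=}\sum_{p=1}^P\lambda_p E_p$ with the $E_p$ i.i.d.\ unit-mean exponential. In the uncorrelated case $\lambda_p\equiv1$ the sum is $\mathrm{Gamma}(P,1)$ and a one-line integral gives $\Esp{S^{-1}}=\Gamma(P-1)/\Gamma(P)=(P-1)^{-1}$ for $P>1$. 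For distinct eigenvalues I would use $S^{-1}=\int_0^\infty e^{-sS}\,ds$ and interchange expectation and integration to obtain $\Esp{S^{-1}}=\int_0^\infty\prod_{p}(1+s\lambda_p)^{-1}\,ds$, then expand the integrand in partial fractions.

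The main obstacle is exactly this last integral: each partial-fraction term behaves like $(s+\lambda_p^{-1})^{-1}$ and integrates to a logarithm that diverges at infinity, so the computation is only legitimate once one checks that the residues sum to zero---the Lagrange identity $\sum_p\prod_{q\neq p}(\lambda_q-\lambda_p)^{-1}=0$ for $P\ge2$---which cancels the $\ln s$ growth and leaves a finite combination of $\ln\lambda_p$ terms coming from the boundary at $s=0$. Simplifying the Vandermonde-type products then reproduces $\sum_{p}(-\lambda_p)^{P-1}\tfrac{\ln\lambda_p}{\lambda_p}\prod_{p'\neq p}(\lambda_{p'}-\lambda_p)^{-1}$; I would take care with the sign and power bookkeeping, since the $(-1)^{P-1}$ factors produced by reversing each difference $\lambda_{p'}-\lambda_p$ must pair up correctly. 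A useful sanity check is that letting all $\lambda_p\to1$ in the distinct-eigenvalue formula recovers $(P-1)^{-1}$, validating both branches simultaneously.
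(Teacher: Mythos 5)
Your proposal is correct, and it takes a genuinely different route from the paper, whose proof is purely by citation: the conditional bound is quoted from \cite{Barbotin2010}, the uncorrelated case is identified as the first inverse moment of an inverse-$\chi^2$ variable from statistical handbooks, and the correlated case is referred to \cite{Thomas1976}. You instead derive everything in place. Your conditioning step matches exactly the framing the paper gives just before the proposition (the Rayleigh case is deterministic once conditioned on the amplitudes), and your Fisher-matrix assembly explains the otherwise unmotivated hypothesis that $\varphi$ be symmetric or antisymmetric: it makes $\sum_n \varphi(nT-t_1)\varphi^\prime(nT-t_1)$ vanish, so the delay decouples from the $2P$ nuisance amplitudes and $1/[\mat J]_{t_1 t_1}=\left(2N\,\textsf{dSNR}\:\vct Z_1^\ast\vct Z_1\right)^{-1}$ is the exact conditional bound with no Schur-complement loss. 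Your inverse-moment computations are also sound: the $\mathrm{Gamma}(P,1)$ integral gives $(P-1)^{-1}$, and the Laplace-transform/partial-fraction argument, with the residue-sum identity cancelling the logarithmic divergence, yields $\sum_p \lambda_p^{P-2}\ln\lambda_p\prod_{q\neq p}(\lambda_p-\lambda_q)^{-1}$, which is algebraically identical to the stated formula once the $(-1)^{P-1}$ factors from reversing each difference are absorbed --- your sign bookkeeping checks out. What each route buys: yours is self-contained, verifiable, and makes the role of every hypothesis visible; the paper's is three lines but opaque to a reader who cannot access the references. The one blemish you inherit from the statement itself rather than from your argument is the $\tau$-normalization: the Fisher information you compute bounds $(\vartriangle\! t_1)^2$, so the displayed bound on $(\vartriangle\! t_1/\tau)^2$ is consistent only under the convention that $\varphi^\prime$ is differentiated with respect to normalized time (equivalently $\tau=1$).
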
\begin{proof}See \cite{Barbotin2010}. The uncorrelated case is found in various statistical handbooks as the first moment of an inverse-$\chi^2$ distributed random variable. For the correlated case, see \cite{Thomas1976}.\end{proof}
This expression is a suitable approximation for multipaths  scenario with distant paths (separated by more than twice the inverse bandwidth.
It gives an important insight on the evolution of the estimation performance when uncorrelated antennas are added to the system. Namely, the RMSE decays as $1/\sqrt{P-1}$.

In general,  multiple paths are interacting with each other and the information matrix cannot be considered diagonal. I	n this case Yau and Bresler \cite{Yau1992} derived the following expression:
\begin{proposition}\cite{Yau1992}\label{prop:nsep}
Let $\mat \Phi$ and $\mat \Phi^\prime$ be $N\times K$ matrices such that
$$\Phi_{n,k}=\varphi((n-1)T-t_k)\ ,\quad\Phi^\prime_{n,k}=\varphi\prime((n-1)T-t_k),$$
%$$\mat{\Phi}=\begin{bmatrix} \varphi(-t_1) & \cdots &\varphi(-t_K)  \\ \vdots & \ddots & \vdots \\ \varphi((N-1)T-t_1) & \cdots & \varphi((N-1)T-t_K)\end{bmatrix}$$
%,
%$$\mat{\Phi}^\prime=-\begin{bmatrix} \varphi^\prime(-t_1) & \cdots &\varphi^\prime(-t_K)  \\ \vdots & \ddots & \vdots \\ \varphi^\prime((N-1)T-t_1) & \cdots & \varphi^\prime((N-1)T-t_K)\end{bmatrix}$$
$n\in\lbrace 1,\dots,N \rbrace,\ k\in\lbrace 1,\dots,K \rbrace$. Given the stochastic matrix
$$\mat{C}= \text{diag}(a_1,\dots,a_K)\left(\sum_{p=1}^P\vct Z^\prime_p\vct Z^{\prime\ast}_p\right)\text{diag}(a_1^\ast,\dots,a_K^\ast),$$
with $\vct Z^\prime_p=[Z_{1,p}\cdots Z_{K,p}]^T$,
the Fisher information matrix $\mat J$ conditionned on the path amplitudes is given by
\begin{equation}
\mat J=2\sigma^{-2}\mat{\Phi}^{\prime\ast}P_{\text{ker}\mat{\Phi}}\mat{\Phi}^\prime\odot \mat{C}.
\end{equation}
such that $P_{\text{ker}\mat{\Phi}}=\mathbb{I}-\mat\Phi\mat\Phi^\pinv$ is the projection into the nullspace of $\mat\Phi$ and ``$\odot$'' denotes the entrywise matrix product.
\end{proposition}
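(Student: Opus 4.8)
\emph{Proof sketch.}
The plan is to treat this as the conditional (``deterministic'') Cram\'er--Rao bound for a separable estimation problem, in which the delays $t_k$ are the parameters of interest while the path coefficients $c_{k,p}$ are complex nuisance parameters that enter the mean linearly. Conditioned on the amplitudes, each $\vct y_p$ is, by (\ref{eq:spl}), a complex Gaussian with deterministic mean $\mat\Phi\vct c_p$, where $\vct c_p=[c_{1,p}\,\cdots\,c_{K,p}]^T$, and covariance $\sigma^2\mathbb I$, the $P$ vectors being independent across antennas but sharing a \emph{common} $\mat\Phi$. First I would form the joint parameter vector $(t_1,\dots,t_K,\vct c_1,\dots,\vct c_P)$ and compute the full Fisher information from the mean-only formula $J_{ij}=2\sigma^{-2}\sum_p\mathrm{Re}\{(\partial_{\theta_i}\mat\Phi\vct c_p)^\ast(\partial_{\theta_j}\mat\Phi\vct c_p)\}$, using the two elementary derivatives: only the $i$th column of $\mat\Phi$ depends on $t_i$, with $\partial(\mat\Phi\vct c_p)/\partial t_i=-c_{i,p}\boldsymbol\varphi'_i$ (here $\boldsymbol\varphi'_i$ is the $i$th column of $\mat\Phi^\prime$), while $\partial(\mat\Phi\vct c_p)/\partial c_{k,p}=\boldsymbol\varphi_k$ is the $k$th column of $\mat\Phi$.

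Next I would arrange the full information matrix in block form with a delay block $\mat J_{tt}$, a coefficient block $\mat J_{cc}$, and a cross block $\mat J_{tc}$, and reduce to the delays via the Schur complement $\mat J=\mat J_{tt}-\mat J_{tc}\mat J_{cc}^{-1}\mat J_{ct}$. The decisive structural facts are that every coefficient derivative lies in the column space of $\mat\Phi$ and that coefficients belonging to different antennas never interact (independent noise, disjoint parameters), so $\mat J_{cc}$ is block diagonal with each block proportional to $\mat\Phi^\ast\mat\Phi$. Carrying out the Schur complement then inserts precisely the orthogonal projector $P_{\text{ker}\mat\Phi}=\mathbb I-\mat\Phi\mat\Phi^\pinv$ between the delay-derivative columns, replacing the naive inner product $(\boldsymbol\varphi'_i)^\ast\boldsymbol\varphi'_j$ by $(\boldsymbol\varphi'_i)^\ast P_{\text{ker}\mat\Phi}\boldsymbol\varphi'_j$. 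The cleaner route I would actually write up profiles out the coefficients first: for fixed delays the maximum-likelihood estimate is $\widehat{\vct c}_p=\mat\Phi^\pinv\vct y_p$, so the concentrated negative log-likelihood is proportional to $\sum_p\norm{P_{\text{ker}\mat\Phi}\vct y_p}^2$, and differentiating this profiled objective twice in the delays makes the projector appear automatically (the Golub--Pereyra separable-least-squares identity).

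Finally I would collect the scalar coefficient weights. Summing the per-antenna contributions attaches the factor $\sum_p c_{i,p}^\ast c_{j,p}$ to the $(i,j)$ entry; substituting the common-amplitude model $c_{k,p}=a_k Z_{k,p}$ factors this as $a_i^\ast a_j\sum_p Z_{i,p}^\ast Z_{j,p}$, which is exactly the $(i,j)$ entry of $\text{diag}(a_1,\dots,a_K)\big(\sum_p\vct Z^\prime_p\vct Z^{\prime\ast}_p\big)\text{diag}(a_1^\ast,\dots,a_K^\ast)=\mat C$, up to the conjugation that the Hermitian symmetry of $\mat J$ absorbs. Since this scalar multiplies $(\boldsymbol\varphi'_i)^\ast P_{\text{ker}\mat\Phi}\boldsymbol\varphi'_j$ entrywise, the assembled matrix is precisely $2\sigma^{-2}\,\mat\Phi^{\prime\ast}P_{\text{ker}\mat\Phi}\mat\Phi^\prime\odot\mat C$, which is the claimed expression.

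I expect the main obstacle to be establishing rigorously that eliminating the complex nuisance coefficients yields the \emph{orthogonal} projector $\mathbb I-\mat\Phi\mat\Phi^\pinv$ rather than an oblique one. The delicate points are handling the complex coefficients correctly --- via their real and imaginary parts or a Wirtinger-derivative formulation --- so that $\mat J_{cc}^{-1}$ feeds exactly $\mat\Phi\mat\Phi^\pinv$ into $\mat J_{tc}\mat J_{cc}^{-1}\mat J_{ct}$, and verifying that $\mat\Phi$ has full column rank so that $\mat\Phi^\pinv$ and the projector are well defined; the latter follows from $t_i\neq t_j$, which makes the columns of $\mat\Phi$ linearly independent, analogously to Lemma~\ref{lemma:recseq}. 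The profiled-likelihood route bypasses most of the complex-parameter bookkeeping, leaving only routine differentiation, so that is the argument I would commit to paper.
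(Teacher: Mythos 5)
Your strategy is sound, but note what it is being compared against: the paper does not prove Proposition~\ref{prop:nsep} at all --- its ``proof'' is the single line ``See \cite{Yau1992} for the proof,'' the result being imported from Yau and Bresler. What you have sketched is, in essence, the derivation that lives inside that citation and the closely related deterministic-CRB literature: condition on the amplitudes so that each $\vct y_p$ is Gaussian with mean $\mat\Phi\vct c_p$ linear in the nuisance coefficients, form the partitioned Fisher matrix in $(t_1,\dots,t_K,\vct c_1,\dots,\vct c_P)$, and eliminate the coefficients by a Schur complement --- equivalently by Golub--Pereyra profiling of the concentrated objective $\sum_p\norm{P_{\text{ker}\mat\Phi}\vct y_p}^2$ --- which is exactly what converts $\mat\Phi^{\prime\ast}\mat\Phi^\prime$ into $\mat\Phi^{\prime\ast}P_{\text{ker}\mat\Phi}\mat\Phi^\prime$. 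The two structural facts you isolate (the coefficient block $\mat J_{cc}$ is block diagonal with identical $\mat\Phi^\ast\mat\Phi$-type blocks because antennas have independent noise and disjoint coefficient parameters; $\mat\Phi$ has full column rank for distinct $t_k$) are precisely the needed ingredients. So your proposal is not an alternative to the paper's argument; it supplies the interior of the citation, correctly in outline.

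One point you wave at but must settle: carrying the complex-amplitude bookkeeping to the end gives entries $J_{ij}=2\sigma^{-2}\,\mathrm{Re}\bigl\{(\boldsymbol{\varphi}^\prime_i)^\ast P_{\text{ker}\mat\Phi}\,\boldsymbol{\varphi}^\prime_j\,\textstyle\sum_p c_{i,p}^\ast c_{j,p}\bigr\}$ (with $\boldsymbol{\varphi}^\prime_i$ the $i$th column of $\mat\Phi^\prime$, as in your sketch), and since $\sum_p c_{i,p}^\ast c_{j,p}=[\mat C^T]_{ij}$ this reads $\mat J=2\sigma^{-2}\,\mathrm{Re}\bigl\{\mat\Phi^{\prime\ast}P_{\text{ker}\mat\Phi}\mat\Phi^\prime\odot\mat C^T\bigr\}$. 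The real-part operator is \emph{not} ``absorbed by the Hermitian symmetry of $\mat J$'': the matrix $\mat\Phi^{\prime\ast}P_{\text{ker}\mat\Phi}\mat\Phi^\prime\odot\mat C$ is Hermitian but in general complex, whereas a Fisher matrix of real parameters (the delays) must be real symmetric, and moreover $\mathrm{Re}\{\mat A\odot\mat C\}\neq\mathrm{Re}\{\mat A\odot\mat C^T\}$ for Hermitian $\mat A,\mat C$ in general, so the conjugation placement genuinely matters. The discrepancy is in the paper's transcription of the result (the standard statements carry the $\mathrm{Re}\{\cdot\}$ and the transposed amplitude-correlation matrix), not in your derivation; but a write-up that claims to prove the displayed equation as printed would be proving something false as stated. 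State and prove the $\mathrm{Re}$-form explicitly, and remark that the proposition should be read with that convention.
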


See \cite{Yau1992} for the proof. The Cram\'er-Rao bounds  for the estimation of the normalized times of arrival are on the diagonal of the expectation of $\mat J^{-1}$.
The matrix $\mat J$ is a complex Wishart matrix. Computing its inverse moments analytically is not an easy task, nevertheless it can be numerically computed via Monte-Carlo simulations.

\section{Application to OFDM and CDMA downlink}\label{sec:ofdm}
\subsubsection{SCS-FRI with uniformly scattered DFT pilots (OFDM)}
The theory in Section \ref{sec:theory} is developed for contiguous DFT coefficients. In OFDM communications, pilots are often uniformly laid out in frequency (ETSI DVB-T \cite{ETSI2001}, 3GPP LTE \cite{Dahlman2008},\dots). The period of pilot insertion $D$ is upper-bounded by $\Delta^{-1}$, the inverse of the delay-spread of the CIR : $D<\tau/\Delta$.
If  not, the CIR cannot be unambiguously recovered from the pilots because of aliasing. For a fixed number of pilots, $D$ is chosen as large as possible ($D=\lfloor\tau/\Delta\rfloor$), as interpolation of the CIR spectrum is more robust than extrapolation.

SCS-FRI can take advantage of uniformly scattered pilot layouts \cite{Maravic2004,Barbotin2010}. For $\widehat\varphi$ flat in $\lbrace-MD,\dots, MD\rbrace$, equation (\ref{eq:spldft}) becomes:
\begin{equation}\widehat{y}_p[mD]=\sum_{k=1}^K c_{k,p}W_N^{mDt_k}+\widehat{q}_p[mD],\label{eq:splofdm}\end{equation}
which corresponds to a dilation by $D$ of the support parameters $\lbrace t_k\rbrace$. By definition $0\leq t_k < \Delta$, and so the bound on $D$ prohibits aliasing of $Dt_k$. Therefore, SCS-FRI is applicable without other modification than division of the recovered support parameters by $D$.
The results of Proposition \ref{prop:crbp} can be extended to scattered pilot with minimal effort.
\begin{corollary}
\label{thm:ofdmCRB}
The minimal uncertainties on the estimation of the parameters in the SCS-FRI scenario (\ref{eq:splofdm}) with $P$ signals are given by
\begin{align*}
  \Esp{\left(\frac{\vartriangle\!\! t_1}{\tau}\right)^2} &\geq \frac{3BT}{4D^2 \pi^2 M(M+1)}\:\Esp{\mathsf{ESNR}^{-1}} \\
  \Esp{\left(\frac{\vartriangle\!\! c_{\ell}}{c_{\ell}}\right)^2}&\geq BT\:\Esp{\mathsf{PSNR}_{\ell}^{-1}}  \hspace{0.5 cm}     \ell = 1, \ldots, P.
\end{align*}
For real-valued signal and noise  $\mathsf{ESNR} = \frac{1}{\sigma^2}\sum_{\ell = 1}^{P} c_{\ell}^2$ denotes the effective signal to noise ratio and $\mathsf{PSNR}_{\ell} = c_{\ell}^2/\sigma^2\,.$. For complex-valued signal and noise  $\mathsf{ESNR} = \frac{1}{2\sigma^2}\sum_{\ell = 1}^{P} c_{\ell}^\ast c_{\ell}$ and $\mathsf{PSNR}_{\ell} = c_{\ell}^\ast c_{\ell}/(2\sigma^2)\,.$.
\end{corollary}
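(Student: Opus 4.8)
The plan is to obtain Corollary \ref{thm:ofdmCRB} by specializing the single-path results of Section \ref{sec:CRB} to the scattered-pilot model (\ref{eq:splofdm}), handling the time-of-arrival bound and the amplitude bounds separately. The separation is legitimate because, for a symmetric (or antisymmetric) kernel $\varphi$, the Fisher information decouples the delay $t_1$ from the nuisance amplitudes $\{c_\ell\}$, so each bound can be read off a scalar information term.

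For the delay bound I would start from the single-path deterministic Cram\'er-Rao bound recalled at the beginning of the section and apply three modifications. First, the dilation: in (\ref{eq:splofdm}) the support enters through $W_N^{mD t_1}$ rather than $W^{m t_1}$, so differentiating the model with respect to $t_1$ brings down an extra factor $D$; the Fisher information is scaled by $D^2$ and the bound by $D^{-2}$. Equivalently, SCS-FRI estimates the dilated delay $D t_1$ and divides by $D$, rescaling the variance by $D^{-2}$ exactly as in the text preceding the corollary, and the hypothesis $D<\tau/\Delta$ is what prevents aliasing of $D t_1$. Second, the passage from one to $P$ antennas: the channels share the common $t_1$ but carry independent amplitudes $c_\ell=a_1 Z_{1,\ell}$, so the per-antenna informations for $t_1$ add and the scalar energy $c_1^2$ of the single-antenna formula is replaced by the aggregate energy $\sum_{\ell=1}^P c_\ell^2$, i.e.\ $\mathsf{PSNR}$ becomes the effective $\mathsf{ESNR}$. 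Third, the fading: conditioned on the amplitudes the bound is deterministic, so I would take expectations over the Rayleigh statistics to produce $\Esp{\mathsf{ESNR}^{-1}}$. Collecting the constant and using $B=(2M+1)/\tau$ with $T=\tau/N$, so that $(2M+1)/N=BT$, turns the deterministic prefactor $\tfrac{3(2M+1)}{4\pi^2 N M(M+1)}$ into $\tfrac{3BT}{4\pi^2 M(M+1)}$, which together with the $D^{-2}$ factor gives the stated delay bound. This is consistent with Proposition \ref{prop:crbp}: substituting $\mathsf{ESNR}=\tfrac{1}{2\sigma^2}\sum_\ell c_\ell^\ast c_\ell=\tfrac{a_1^2}{2\sigma^2}\,\vct Z_1^\ast\vct Z_1$ into $\textsf{dSNR}$ and $\Esp{(\vct Z_1^\ast\vct Z_1)^{-1}}$ recovers the same expression.

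For the amplitude bounds the argument is lighter, because the amplitudes are estimated independently per antenna through the $P$ Vandermonde systems of (\ref{eq:spldft}). An amplitude is a linear parameter, so its Cram\'er-Rao bound is $\sigma^2/\|\text{basis vector}\|^2$; since $|\widehat\varphi|$ is flat on the pilot set and $|W_N^{mD t_\ell}|=1$, this energy is independent of $D$, and the bound coincides antenna-by-antenna with the single-Dirac amplitude formula $\tfrac{2M+1}{N}\mathsf{PSNR}_\ell^{-1}=BT\,\mathsf{PSNR}_\ell^{-1}$. Taking the expectation over the fading of each channel then yields $BT\,\Esp{\mathsf{PSNR}_\ell^{-1}}$ for $\ell=1,\dots,P$.

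The main obstacle is bookkeeping rather than conceptual. Pinning down the exact constant $\tfrac{3}{4\pi^2 M(M+1)}$ requires evaluating the derivative energy in closed form through $\sum_{|m|\le M} m^2=M(M+1)(2M+1)/3$ and being careful with the DFT normalization together with the factor of two that distinguishes the real- and complex-valued definitions of $\mathsf{ESNR}$ and $\mathsf{PSNR}_\ell$. One must also justify that the off-diagonal information coupling $t_1$ to the amplitudes vanishes, so that the scalar formula is exact, which is precisely where the symmetry (or antisymmetry) of $\varphi$ enters, and note that the single-path expression remains only an approximation for well-separated multipath, a caveat inherited from the remark following the single-Dirac bound.
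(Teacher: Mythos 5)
Your proposal is correct and follows essentially the same route as the paper's own (very terse) proof: rescale the variance by $D^{-2}$ via the dilated delay $Dt_1$, invoke the single-path bound of Proposition~\ref{prop:crbp} for the $P$-antenna fading case, and evaluate the constant by computing the derivative energy (\textsf{dSNR}) explicitly for the sinc kernel, which is where $\sum_{|m|\le M} m^2 = M(M+1)(2M+1)/3$ and $BT=(2M+1)/N$ enter. Your additional unpacking of the antenna aggregation (per-channel Fisher informations adding so that $\mathsf{PSNR}$ becomes $\mathsf{ESNR}$) and of the $D$-independence of the amplitude bound is exactly what the paper leaves implicit in its appeal to Proposition~\ref{prop:crbp}.
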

\begin{proof}
$$\Esp{\left(\frac{\vartriangle\!\! t_1}{\tau}\right)^2}=\Esp{\left(\frac{\vartriangle\!\! (Dt_1)}{\tau}\right)^2}\cdot D^{-2}.$$
Evaluation of $\Esp{\left(\frac{\vartriangle\!\! (Dt_1)}{\tau}\right)^2}$ based on measurements from (\ref{eq:splofdm}) is answered by Proposition \ref{prop:crbp}. The \textsf{dSNR} is explicitly computed for (\ref{eq:splofdm}) taking $\varphi=\text{sinc}_B$. \end{proof}
\subsubsection{Extension to Walsh-Hadamard coded schemes (CDMA)}
Numerous applications use the $2^n$-WHT to code the channel into $2^n$ subchannels ($N=2^n$). Among others, IS-95 uses a $64$-WHT to code the downlink channel. The straightforward way to insert pilots is to use one of these subchannels as a pilot itself and use correlation based channel estimation methods as the Rake-receiver for example  \cite{Price1958}. The SCS-FRI algorithm works in the DFT domain but can nevertheless be applied in the WHT domain with pilots uniformly scattered by $D$ a power of 2.
\begin{proposition}\label{prop:WHT}
Let $\mat W_n$ and $\mat S_n$ be respectively the $2^n$-points DFT and WHT matrices obtained by Sylvester's construction:
$$\mat S_1=\frac{1}{\sqrt{2}}\begin{bmatrix}1 &1\\1 &-1\end{bmatrix},\quad \mat S_{i+1}=\mat S_1 \krop \mat S_i.$$
Then, for any $\ell\in\lbrace 1,\dots , n-1\rbrace$ the set of $\mat S_n$'s columns  with indices in $\left\lbrace 2^\ell+i\right\rbrace_{1\dots 2^\ell}$  and the set of $\mat W_n$'s columns with indices in $\left\lbrace (i-1/2)\cdot 2^{n-\ell}+1\right\rbrace_{1\dots 2^\ell}$ span the same subspace.
\end{proposition}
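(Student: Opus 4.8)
The plan is to exploit the common Kronecker (tensor) structure of both transforms and reduce the statement to matching a single shared tensor factor plus a Vandermonde nonsingularity argument. Throughout I index rows and columns from $0$ to $2^n-1$ and read each index through its binary digits; translating the $1$-based sets of the statement, the selected $\mat S_n$-columns are those with $0$-based index in $[2^\ell,2^{\ell+1})$, while the selected $\mat W_n$-columns are those at frequencies $(2i-1)2^{n-\ell-1}$, $i=1,\dots,2^\ell$, where $(\mat W_n)_{jk}=2^{-n/2}\omega^{jk}$ for a primitive $2^n$-th root of unity $\omega$. For the WHT side, Sylvester's recursion gives $\mat S_n=\mat S_1^{\krop n}$, so a column $\mat s_k$ is the Kronecker product of the columns $\mat s^{(1)}_{k_r}$ of $\mat S_1$ dictated by the binary digits $k_{n-1},\dots,k_0$ of $k$ in big-endian order. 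The indices in $[2^\ell,2^{\ell+1})$ are precisely those with $k_{n-1}=\cdots=k_{\ell+1}=0$, $k_\ell=1$, and $k_{\ell-1},\dots,k_0$ free. Hence every selected column factors as $\mat u\krop\mat w$ with a fixed ``high'' factor $\mat u=(\mat s^{(1)}_0)^{\krop(n-\ell-1)}\krop\mat s^{(1)}_1$ and a ``low'' factor $\mat w$ ranging over all columns of $\mat S_1^{\krop\ell}$; since those columns form a basis of $\Cspace^{2^\ell}$, the selected WHT columns span $\Cspace\mat u\krop\Cspace^{2^\ell}$ under the high/low splitting $\Cspace^{2^n}=\Cspace^{2^{n-\ell}}\krop\Cspace^{2^\ell}$.

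For the DFT side I use the \emph{same} splitting: write a row index as $j=j_{\mathrm{hi}}2^\ell+j_{\mathrm{lo}}$ and a selected frequency as $m=(2i-1)2^{n-\ell-1}$, so that $\omega^{jm}=\omega^{j_{\mathrm{hi}}2^\ell m}\,\omega^{j_{\mathrm{lo}}m}$. The key computation is $\omega^{2^\ell m}=\omega^{(2i-1)2^{n-1}}=-1$, which holds because $\omega^{2^{n-1}}=-1$ and $2i-1$ is odd; therefore $\omega^{j_{\mathrm{hi}}2^\ell m}=(-1)^{j_{\mathrm{hi}}}$ independently of $i$. Each selected DFT column thus factors as $\mat v\krop\mat z_i$, with a single common high factor $\mat v$ of entries $(-1)^{j_{\mathrm{hi}}}$ and low factors $\mat z_i$ of entries $\omega^{j_{\mathrm{lo}}(2i-1)2^{n-\ell-1}}$, so the selected DFT columns span $\Cspace\mat v\krop\mathrm{span}\{\mat z_i\}$.

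It then remains to match the two descriptions. The common high factors coincide, $\mat u=\mat v$: a short digit computation shows $\mat u$ has entry $(-1)^{j_\ell}$ (bit $\ell$ of $j$), while $(-1)^{j_{\mathrm{hi}}}$ depends only on the parity of $j_{\mathrm{hi}}$, i.e. on its least significant bit, which is exactly $j_\ell$. The low factors are linearly independent: setting $\zeta=\omega^{2^{n-\ell-1}}$, a primitive $2^{\ell+1}$-th root of unity, $\mat z_i$ is the Vandermonde vector $(\zeta^{(2i-1)j_{\mathrm{lo}}})_{j_{\mathrm{lo}}}$ with node $\zeta^{2i-1}$; the odd powers $\zeta^{1},\zeta^{3},\dots,\zeta^{2^{\ell+1}-1}$ are $2^\ell$ distinct values, so the associated Vandermonde matrix is nonsingular and $\mathrm{span}\{\mat z_i\}=\Cspace^{2^\ell}$. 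Combining, both column sets span $\Cspace\mat u\krop\Cspace^{2^\ell}$, which proves the claim.

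The main obstacle I anticipate is essentially bookkeeping: pinning down the big-endian order of the Kronecker factors consistently with $\mat S_{i+1}=\mat S_1\krop\mat S_i$, converting the $1$-based index sets of the statement into the $0$-based binary description, and then verifying that the two superficially different high factors $\mat u$ and $\mat v$ are \emph{literally} the same vector. The one genuinely algebraic ingredient is the collapse $\omega^{(2i-1)2^{n-1}}=-1$, which is exactly what forces the chosen frequencies (odd multiples of $2^{n-\ell-1}$) to factor through the alternating-sign vector and thereby meet the WHT block; everything else reduces to the standard Vandermonde argument.
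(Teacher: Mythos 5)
Your proof is correct, but it follows a genuinely different route from the paper's. The paper argues by orthogonality and recursion: it splits $\mat S_n$ into blocks $\mat S_n^{(l)}$ and $\mat S_n^{(r)}$, shows by direct computation that $\mathrm{span}\ \mat S_n^{(r)}$ is orthogonal to every even-indexed DFT column (the factor $1-W_{2^n}^{k2^{n-1}}$ vanishes for $k$ even), concludes $\mathrm{span}\ \mat S_n^{(r)}=\mathrm{span}\ \mat W_n^{(o)}$ by a dimension count against the orthogonal complement of $\mathrm{span}\ \mat W_n^{(e)}$, and then observes that $\left\langle \vct w_{2^n}^{2k},\ \vct s_n^{(l)}\right\rangle=\left\langle \vct w_{2^{n-1}}^{k},\ \vct s_{n-1}\right\rangle$, so the left block reproduces the same problem at size $2^{n-1}$; the claim for general $\ell$ follows by unrolling this recursion $n-\ell-1$ times, a step the paper compresses into ``this property applies recursively.'' You instead give a direct, non-inductive argument: you factor both column families through the common splitting $\Cspace^{2^n}=\Cspace^{2^{n-\ell}}\krop\Cspace^{2^\ell}$, observe that all selected columns on either side share the same rank-one high factor --- the alternating vector with entries $(-1)^{j_\ell}$, forced on the DFT side by the collapse $\omega^{(2i-1)2^{n-1}}=-1$ --- and then show the low factors exhaust $\Cspace^{2^\ell}$, trivially for the WHT (they are the columns of the invertible $\mat S_1^{\krop\ell}$) and via a Vandermonde argument with distinct nodes $\zeta^{2i-1}$ for the DFT. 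Each approach buys something: the paper's recursion gets linear independence for free from the orthonormality of the two bases and keeps each step computationally light, whereas your argument carries heavier index bookkeeping but needs no induction, makes the role of the odd multiples of $2^{n-\ell-1}$ transparent, and exhibits the common subspace explicitly as $\Cspace\mat u\krop\Cspace^{2^\ell}$ --- an identification the paper's proof never writes down.
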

\begin{proof}
We partition the Walsh-Hadamard transform matrix in two ``left'' and ``right'' blocks:
$$\mat S_n=\begin{bmatrix}\mat S_n^{(l)}& \mat S_n^{(r)}\end{bmatrix},\ \ \mat S_n^{(l)}=\begin{bmatrix}\mat S_{n-1}\\ \mat S_{n-1}\end{bmatrix},\ \ \mat S_n^{(r)}=\begin{bmatrix}\mat S_{n-1}\\ -\mat S_{n-1}\end{bmatrix}.$$
Given $\vct w_{2^n}^k=[W_{2^n}^{0k}\cdots W_{2^n}^{(2^n-1)k}]$ the $k^{th}$ vector of the DFT basis and $\vct s^{(r)}\in \text{span }{\mat S_n^{(r)}}$:
\begin{align*}
\left\langle\vct{w}_{2^n}^k,\vct{s}_n^{(r)}\right\rangle &= \sum_{l=0}^{N-1}W_{2^n}^{kl}s_n^{(r)}[l] \\
 &= \sum_{l=0}^{2^{n-1}-1}W_{2^n}^{kl}s_n^{(r)}[l]+W_{2^n}^{k(l+2^{n-1})}s_n^{(r)}[l+2^{n-1}]\\
 &= \sum_{l=0}^{2^{n-1}-1}W_{2^n}^{kl}s_n^{(r)}[l]\left(1-W_{2^n}^{k2^{n-1}}\right)
\end{align*}
Hence
$$\left\langle\vct{w}_{2^n}^k,\ \vct{s}_n^{(r)}\right\rangle=0\ ,\quad \text{for $k$ even.}$$
The spans of $\mat S_n^{(r)}$ and $\mat S_n^{(r)}$ partition the original $2^n$-dimensional space in two subspaces of  dimension $2^{n-1}$. Let $\mat W_n^{(o)}=\lbrace\vct w_{2^n}^k\rbrace_{0 \leq 2k+1<N,k\in\Nspace}$ be the DFT basis vectors with odd indices and $\mat W_n^{(e)}=\lbrace\vct w_{2^n}^k\rbrace_{0 \leq 2k<N,k\in\Nspace}$ the ones with even indices. The spans of $\mat W_n^{(o)}$ and $\mat W_n^{(e)}$ partition the original space into two subspaces of dimension $2^{n-1}$. Since $\text{span }\mat S_n^{(r)}\ \perp\ \text{span }\mat W_n^{(e)}$:
\begin{align*}\text{span }\mat W_n^{(o)} &= \text{span }\mat S_n^{(r)},\\
\text{span }\mat W_n^{(e)} &= \text{span }\mat S_n^{(l)}.
\end{align*}
This property applies recursively, since for $k\in\lbrace0,\dots, 2^{n-1}-1\rbrace$:
\begin{align*}\left\langle\vct{w}_{2^n}^{2k},\ \vct{s}_n^{(l)}\right\rangle &=\left\langle\frac{1}{\sqrt{2}}\begin{bmatrix}\vct{w}_{2^{n-1}}^k\\\vct{w}_{2^{n-1}}^k\end{bmatrix},\ \frac{1}{\sqrt{2}}\begin{bmatrix}\vct{s}_{n-1}\\\vct{s}_{n-1}\end{bmatrix}\right\rangle,\\
&=\left\langle\vct{w}_{2^{n-1}}^{k},\ \vct{s}_{n-1}\right\rangle,
\end{align*}
where $\vct s_{n-1}\in \text{span }{\mat S_{n-1}}$.
\end{proof}
Proposition \ref{prop:WHT} states that one can choose $2^\ell$ contiguous Walsh-Hadamard codewords for pilots and get $2^\ell$ uniformly spread DFT pilots with layout gap $D=2^{n-\ell}$. The channel coding is akin CDMA, but the pilot layout matches the one used in OFDM communication. The lesson, is that the Walsh-Hadamard transform alone achieves ``scrambling'' of data followed by carrier mapping in the DFT domain in a fashion similar to SC-FDMA \cite{Myung2007}. In SC-FDMA, the data are first ``scrambled'' by application of a shorter length DFT.

This result  has a nice interpretation in the context of generalized Fourier transforms, the $2^n$-WHT being itself the Fourier transform on the finite group $(\Zspace/2\Zspace)^n$ instead of $\Zspace/2^n\Zspace$ for the classical $2^n$-points DFT \cite{Barbotin2010,Terras1999}. A similar result holds for DFT on any toric finite group \cite{Barbotin2010}.

% needed in second column of first page if using \IEEEpubid
%\IEEEpubidadjcol
\section{Application: Fading channel estimation in multi-output systems}\label{sec:application}
\subsection{Channel model}
\subsubsection{Physical assumptions}
	A linear time-invariant channel is characterized by its impulse response $h$. In mobile communications, channels are transient, but we may assume the channel to be locally invariant around a time $\tau$. This leads to the definition of a time-dependent channel impulse response $h_\tau$. Consider a channel impulse response made of a large number $L$ of echoes:
	\begin{equation}\label{eq:echoChannel}
	h_\tau(t)=\sum_{l=1}^L \alpha_l(\tau)\delta(t-t_l(\tau)).
	\end{equation} 
	The number of echoes $L$, is usually far too large to warrant a finite rate of innovation approach. However individual echoes aggregate in a smaller and manageable number of clusters $K$ \cite{Turin1956}. The rationales behind clustering are the same as for the common support assumption: a finite bandwidth combined with background noise allow only for a limited resolution. Table~\ref{tabl:invB} lists a few examples for which clustering applies in typical operating conditions.

	This simplification is at the heart of medium and narrowband wireless communications \cite{Molisch2005}. We want to estimate $h_\tau$ by sending probes at the input and collecting samples at the output.
	
	 Correlation of the channel with respect to time is an important feature to exploit, however we will not consider it, as scheduling in modern communication systems makes its usage uncertain. Hence we settle on a time $\tau$ and drop it from the notation.
	 
	 Communication is carried over a restricted frequency band, which is achieved by pulse-shaping with a template function $\varphi(t)$ and modulation by $e^{j\omega_c t}$. Applying clustering to (\ref{eq:echoChannel}) the channel impulse response becomes:
	 \begin{align}
	 h(t)&=\sum_{k=1}^K c_k\varphi(t-t_k)\label{eq:mpchannel}\\
	 \text{s.t. }c_k&=a_kZ_k= e^{j\omega_c t_k}\sum_{(\alpha_l,t_l)\in\mathcal{C}_k}\alpha_le^{j\omega_c(t_l-t_k)},
	 \end{align}
	 where $Z_k$ has unit-variance, $a_k$ is the appropriate scaling parameter and $\mathcal{C}_k$ is the $k^{th}$ cluster. Assuming $\lbrace \alpha_l e^{j\omega_c(t_l-t_k)}\rbrace_{(\alpha_l,t_l)\in\mathcal{C}_k}$ contains i.i.d. elements with finite first two moments (echoes of finite energy)
	 
	 $$\lim_{\#\mathcal{C}_k\rightarrow\infty} Z_k \sim \mathcal{N}_{\mathbb C}(0,1).$$
	 
	 This is the classical non line of sight fading scenario where the paths amplitudes $|Z_k|$ are independently Rayleigh distributed.
	 \subsubsection{Multipoint communications, one to many}\label{sec:spmod}
	Communication through fading channels rely on spatial diversity to gain robustness. Spatial diversity is achieved with the deployment of several antennas at the receiver and/or transmitter. We describe a spatial channel model between one transmitter and several receivers, which generalizes to MIMO communications in a straightforward manner. The physical properties of the channel are the following as shown in Figure~\ref{fig:mpchannel}:
	\begin{itemize}
	\item The distance in between antennas $m$ and $n$ is $d_{m,n}$.
	\item Each path is characterized by an angle of arrival (AoA) $\theta_k$. To simplify computations it is assumed that the AoA is the same for all antennas (far field assumption). In the near field, a scatterer surrounds the receiver and the distribution becomes almost isotropic. Hence this assumption can be made for both regimes with limited error.
	\item The direction normal to the segment between antennas $m$ and $n$ points toward azimuth $\theta_{m,n}$.
	\end{itemize}
	\begin{figure}[h]
\centering 
\includegraphics[width=3.3in]{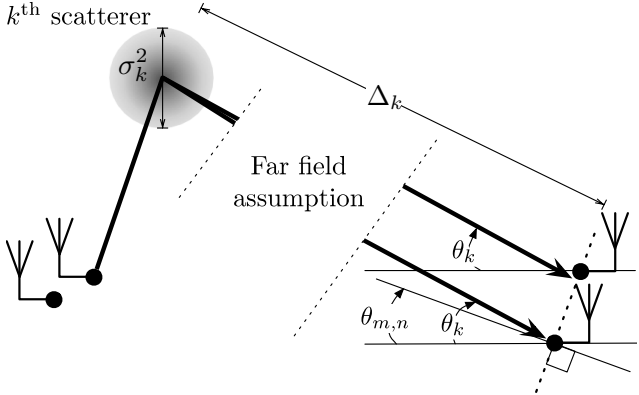}
\caption{Channel model with a single scatterer. Each scatterer is characterized by its apparent width (width/distance) $\sigma_k/\Delta_k$ and its azimuth $\theta_k$. This model is considered valid in the near field as well, as the scatterer surrounds the receiver, thus having no intrinsic azimuth.}
\label{fig:mpchannel}
\end{figure}
	The channel model in (\ref{eq:mpchannel}) applies to the $P$ subchannels
	\begin{equation}
	  h_m(t)=\sum_{k=1}^K a_{m,k}Z_{m,k}\varphi(t-t_{m,k})\ ,\quad m=1,\dots,P.\\
	\end{equation}
	We assume the distance in between antennas is smaller than the achievable spatial resolution, hence $$t_{1,k}=t_{2,k}=\dots=t_{P,k}.$$
	To fully characterize the channel, the path correlation across antennas must be known --- by assumption $Z_{m,k}$ and $Z_{m^\prime,k^\prime}$ are independent for $k\neq k^\prime$.
	Following Salz and Winters \cite{Salz1994} we derive a formula for the autocorrelation matrix of $\vct{Z}_k=[Z_{1,k} \cdots Z_{P,k}]$. However we put a Gaussian prior on the cluster shape rather than a uniform one with discontinuities at the boundaries. 
\subsubsection{Spatial correlation of paths}
As in \cite{Salz1994}, a large number of reflections are assumed to be drawn from a continuous probability distribution for each scatterer. 
\begin{proposition}\label{prop:spacecorr}
Under the spatial channel model described in Section~\ref{sec:spmod}, the antenna crosscorrelation is closely approximated by:
\begin{align}
\Esp{Z_{k,m}Z_{k,n}^\ast}=&\quad J_0\left(\frac{\omega_c}{c}d_{m,n}\right)\nonumber\\&+2\sum_{l=1}^\infty j^{l}\frac{I_l(\kappa_k)}{I_0(\kappa_k)}J_l\left(\frac{\omega_c}{c}d_{m,n}\right)\\
&\qquad\qquad\cdot\cos\left[l\left(-\theta_{m,n}+\theta_k-\frac{\pi}{2}\right)\right],\nonumber
\end{align}
where $\Delta_k^2/\sigma_k^2\approx (1-e^{-3\kappa_k/4})\kappa_k$, $J_{(\cdot)}$ is the Bessel function of the first kind and $I_{(\cdot)}$ is the modified Bessel function of the first kind.
\end{proposition}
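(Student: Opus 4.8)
The plan is to follow the superposition-of-plane-waves approach of Salz and Winters \cite{Salz1994}, but with a von~Mises angular prior (the circular surrogate of the Gaussian cluster shape) so that the cross-correlation collapses to a closed Bessel-series form. First I would model the fading coefficient $Z_{k,m}$ as the aggregate of the echoes belonging to cluster $\mathcal{C}_k$, each arriving from an azimuth $\theta$ drawn from a continuous angular density $p_k(\theta)$ concentrated about the mean direction $\theta_k$. Because the per-echo amplitudes are i.i.d., zero-mean, and independent across clusters (the assumptions of Section~\ref{sec:spmod}), the only surviving contribution to $\Esp{Z_{k,m}Z_{k,n}^\ast}$ is the deterministic array phase difference between the two antennas, so the cross-correlation reduces to the single angular integral
\begin{equation*}
\Esp{Z_{k,m}Z_{k,n}^\ast}=\int_{-\pi}^{\pi} p_k(\theta)\,e^{\,j\frac{\omega_c}{c}d_{m,n}\cos\left(\theta-\theta_{m,n}-\frac{\pi}{2}\right)}\,d\theta,
\end{equation*}
where the $-\pi/2$ offset records that $\theta_{m,n}$ is the azimuth \emph{normal} to the $m$--$n$ baseline: a wavefront from $\theta$ projects onto the baseline with the cosine of $\theta-\theta_{m,n}-\frac{\pi}{2}$, giving an excess path length $d_{m,n}\cos(\theta-\theta_{m,n}-\frac{\pi}{2})$ and hence this carrier phase.

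The second step is a pure spectral expansion. Writing $z=\frac{\omega_c}{c}d_{m,n}$ and $\phi=\theta-\theta_{m,n}-\frac{\pi}{2}$, I would invoke the Jacobi--Anger identity
\begin{equation*}
e^{\,jz\cos\phi}=J_0(z)+2\sum_{l=1}^\infty j^{l}J_l(z)\cos(l\phi),
\end{equation*}
and interchange the (absolutely convergent, since $J_l(z)\to 0$ faster than geometrically for $l>|z|$) sum with the integral against the bounded density $p_k$. This leaves, for each $l$, the circular moment $\int p_k(\theta)\cos\big(l(\theta-\theta_{m,n}-\frac{\pi}{2})\big)\,d\theta$. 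Choosing the von~Mises prior $p_k(\theta)=\big(2\pi I_0(\kappa_k)\big)^{-1}e^{\kappa_k\cos(\theta-\theta_k)}$ makes these moments explicit: its characteristic coefficients are $\Esp{e^{jl\theta}}=\frac{I_l(\kappa_k)}{I_0(\kappa_k)}e^{jl\theta_k}$, so taking real parts yields $\frac{I_l(\kappa_k)}{I_0(\kappa_k)}\cos\big(l(\theta_k-\theta_{m,n}-\frac{\pi}{2})\big)$. Substituting back reproduces the claimed series term by term, the $J_0(z)$ arising from $l=0$ and the factor $2$ from pairing $\pm l$.

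The final, and genuinely approximate, step is to tie the concentration $\kappa_k$ to the observable apparent width $\sigma_k/\Delta_k$; this is where the ``closely approximated'' qualifier enters. A scatterer of physical spread $\sigma_k$ seen at distance $\Delta_k$ subtends an angular spread of order $\sigma_k/\Delta_k$, so I would match the dispersion of the Gaussian cluster to that of the surrogate von~Mises law. In the concentrated regime the von~Mises tends to a wrapped Gaussian of angular variance $1/\kappa_k$, suggesting $\kappa_k\approx\Delta_k^2/\sigma_k^2$; the factor $(1-e^{-3\kappa_k/4})$ then repairs this identification for wide clusters (small $\kappa_k$), where the density wraps around the circle and the naive variance matching breaks down, producing the stated calibration $\Delta_k^2/\sigma_k^2\approx(1-e^{-3\kappa_k/4})\kappa_k$. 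I expect this variance-matching to be the main obstacle: the first two steps are an exact computation once the von~Mises prior is fixed, whereas justifying the replacement of the Gaussian cluster by a von~Mises law and pinning down the empirical interpolation for $\kappa_k$ so that it is valid in both the narrow- and wide-cluster regimes is the delicate part of the argument.
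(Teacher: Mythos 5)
Your proposal is correct in its analytic core and reaches the stated series, but it takes a partially different route from the paper's appendix, and the difference is worth spelling out. For the Bessel part, you expand the plane-wave factor $e^{jz\cos\phi}$ by Jacobi--Anger and then evaluate the resulting circular moments using the von~Mises characteristic coefficients $I_l(\kappa_k)/I_0(\kappa_k)\,e^{jl\theta_k}$; the paper does the dual computation, expanding the von~Mises \emph{density} itself by Jacobi--Anger (via $e^{\kappa_k\cos\vartheta}$, using $I_l(jx)=j^lJ_l(x)$) into a cosine series and then integrating each term against $e^{jz\sin\vartheta}$ with the integral representation of $I_l$ (Abramowitz--Stegun 9.6.19). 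By orthogonality of $\{\cos l\vartheta,\sin l\vartheta\}$ on the circle these are the same calculation, and yours is arguably the cleaner bookkeeping. The genuine divergence is in how the von~Mises surrogate is justified. The paper's first step is an exact computation: it integrates the planar Gaussian cluster pdf along rays from the antenna to obtain the true azimuthal density in closed form, a one-parameter family indexed by $\kappa_k^\prime=\Delta_k^2/\sigma_k^2$, and only then fits a von~Mises of scale $\kappa_k$ to it, reporting the calibration $\kappa_k^\prime\approx(1-e^{-3\kappa_k/4})\kappa_k$ as an empirical fit with Kullback--Leibler divergence below $0.02$ bits for all $\kappa_k$. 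You instead postulate the von~Mises prior outright and calibrate it by small-angle variance matching ($\kappa_k\approx\Delta_k^2/\sigma_k^2$ in the concentrated regime), treating the $(1-e^{-3\kappa_k/4})$ factor as an unexplained repair for wide clusters. Neither you nor the paper derives that factor analytically, so this is not a logical gap so much as a loss of content: the paper's ray-integration step produces the exact target density against which the approximation error is measured, which is what makes the proposition's ``closely approximated'' claim quantitative, whereas in your version the quality of the Gaussian-to-von~Mises substitution remains an unverified heuristic. If you add the ray-integration derivation of the azimuthal law, your argument becomes fully equivalent to the paper's.
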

\begin{proof}
See appendix \ref{sec:appendixB}. It is only a close approximation since the azimuthal distribution at the receiver is approximated by a Von-Mises distribution.
\end{proof}
\begin{corollary}
For a path width $\kappa_k$ large enough:
\begin{align}
\Esp{Z_{k,m}Z_{k,n}^\ast}\approx &\quad J_0\left(\frac{\omega_c}{c}d_{m,n}\right)\nonumber\\
&+2\sqrt{2\pi\kappa_k}\sum_{l=1}^\infty j^{l}f_{\kappa_k}(l)J_l\left(\frac{\omega_c}{c}d_{m,n}\right)\\
&\qquad\qquad\quad \cdot\cos\left[l\left(-\theta_{m,n}+\theta_k-\frac{\pi}{2}\right)\right],\nonumber
\end{align}
such that $f_{\kappa_k}$ is the centered Gaussian pdf of variance $\kappa_k$.
\end{corollary}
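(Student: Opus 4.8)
The plan is to notice that the Corollary differs from Proposition~\ref{prop:spacecorr} in exactly one place: the Bessel ratio $I_l(\kappa_k)/I_0(\kappa_k)$ multiplying each term of the series has been replaced by $\sqrt{2\pi\kappa_k}\,f_{\kappa_k}(l)$. Since $f_{\kappa_k}$ is the centered Gaussian density of variance $\kappa_k$, one has $\sqrt{2\pi\kappa_k}\,f_{\kappa_k}(l)=e^{-l^2/(2\kappa_k)}$, so the entire statement reduces to the single term-by-term asymptotic
$$\frac{I_l(\kappa_k)}{I_0(\kappa_k)}\;\longrightarrow\;e^{-l^2/(2\kappa_k)}\qquad\text{as }\kappa_k\to\infty .$$
Everything else in the Proposition---the $J_0$ term, the Bessel factors $J_l(\tfrac{\omega_c}{c}d_{m,n})$, and the cosine phases---is carried over unchanged.

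First I would establish the ratio estimate from the large-argument expansion of the modified Bessel function, $I_l(\kappa)=\frac{e^\kappa}{\sqrt{2\pi\kappa}}\bigl(1-\tfrac{4l^2-1}{8\kappa}+O(\kappa^{-2})\bigr)$. Forming the quotient with the $l=0$ term and expanding to first order in $1/\kappa$ yields $I_l(\kappa)/I_0(\kappa)=1-\tfrac{l^2}{2\kappa}+O(\kappa^{-2})$, which matches the Taylor expansion of $e^{-l^2/(2\kappa)}$ to the same order. Equivalently---and more consistent with the Von~Mises modelling already invoked in Proposition~\ref{prop:spacecorr}---I would use the Jacobi--Anger identity $e^{\kappa\cos\theta}=I_0(\kappa)+2\sum_{l\ge 1}I_l(\kappa)\cos(l\theta)$ to recognize $I_l(\kappa_k)/I_0(\kappa_k)$ as the $l$-th trigonometric moment of the Von~Mises density of concentration $\kappa_k$. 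Since $\cos\theta\approx 1-\theta^2/2$ near the mode, for large $\kappa_k$ this density is close to the Gaussian $\mathcal{N}(0,1/\kappa_k)$, whose characteristic function evaluated at the integer $l$ is exactly $e^{-l^2/(2\kappa_k)}$. Either route delivers the required replacement.

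With the ratio estimate in hand, substituting it into each summand of Proposition~\ref{prop:spacecorr} and collecting the common factor reproduces the claimed expression. The step I expect to require the most care is justifying the term-by-term replacement inside the infinite series: the Bessel expansion above is asymptotic for \emph{fixed} order $l$ and large $\kappa_k$, and degrades once $l$ grows comparably to $\sqrt{\kappa_k}$. However, the factors $J_l(\tfrac{\omega_c}{c}d_{m,n})$ and $e^{-l^2/(2\kappa_k)}$ both decay super-exponentially in $l$, so the series is effectively supported on indices $l=O(\sqrt{\kappa_k})$, on which the first-order estimate is uniformly accurate; bounding the discarded tail---or invoking $L^1$ convergence of the full Von~Mises density to its Gaussian limit and passing to Fourier coefficients---closes the argument. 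Consistent with the ``$\approx$'' in the statement and the Von~Mises approximation underlying the Proposition, the result is asymptotic in $\kappa_k$ rather than exact.
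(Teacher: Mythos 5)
Your proposal is correct and follows exactly the route the paper intends: the paper states this corollary with no proof at all, treating it as an immediate consequence of Proposition~\ref{prop:spacecorr} under the large-$\kappa_k$ replacement $I_l(\kappa_k)/I_0(\kappa_k)\approx e^{-l^2/(2\kappa_k)}=\sqrt{2\pi\kappa_k}\,f_{\kappa_k}(l)$, i.e.\ the Gaussian limit of the Von Mises Fourier coefficients that you derive. Your extra care about uniformity of the approximation over the summation index $l$ (and the tail bound via the decay of $J_l$) supplies a justification the paper leaves entirely implicit, so your write-up is, if anything, more complete than the original.
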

The result is in its form similar to \cite{Salz1994}, however the derivation stays closer to the original physical model. 

\section{Numerical results}
\begin{table}[t]\centering \ra{1.3}\caption{Simulation parameters}
\begin{tabular}{@{}lcl@{}}\toprule
\textbf{Parameter}& \textbf{Symbol}& \textbf{Value}\\\midrule
 Sampling step &$T$&$50\mathrm{ns}$\\ 
 Bandwidth & $B$& $20\mathrm{MHz}$\\
 Center frequency & $f_c$ & $2.6\mathrm{GHz}$\\ 
 Frame duration (without padding)& $\tau$& $25.55\mathrm{\mu s}$\\
 Samples per frame &$N_{\text{frame}}$& $511$\\
 Pilots per frame &$N$&$63$\\
 Pilot gap &$D$&$8$\\
 Delay spread & $\Delta$ & $1.6\mathrm{\mu s}$\\\bottomrule
\end{tabular}
 \label{tabl:rxspec}
\end{table}

For simulations we use the channel model developed in Section~\ref{sec:application}, and choose its parameters to \emph{loosely} follow the 3GPP-LTE standard. Its characteristics are listed in Table~\ref{tabl:rxspec}. We assume 63 pilots which are uniformly spaced in frequency, one every 8.  The transmitted frame is circularly padded such as to guarantee circular convolution of the transmitted signal with the CIR. 
Results are derived from three different experiments:
\begin{itemize}
\item[\textsf A] The medium has two paths separated by $2T$. The second path's expected amplitude is $1/10^{th}$ of the expected amplitude of the first path. The receiver possesses 1, 2, 4 or 8 uncorrelated antennas. The channels have exact SCS ($\varepsilon=0$).
\item[\textsf B] The medium has two paths separated by $T$ or $2T$. Both paths have the same expected amplitude. The receiver has 4 uncorrelated antennas. The channels have either exact SCS ($\varepsilon=0$) or non-exact SCS ($\varepsilon=T/50=1$ns). The discrepancy in the ToA between antennas is uniformly distributed in $[-\varepsilon\ \varepsilon]$. A time lapse of $2T/50$ corresponds to a path length difference of $60$~cm. 

%\begin{figure*}[ht!]
%\centering
%\includegraphics[width=0.82\textwidth]{channel}
%\caption{(a) The physical layout of the channel for \xp C. The channel has four scatterers labeled \textsf{A}, \textsf{B}, \textsf{C}, and \textsf{D}. (b) The expected CIR of the channels. (c) Matrix entry $(i,j)$ is the modulus of the  fading correlation between antenna $i$ and antenna $j$. Each matrix corresponds to a given scatterer.}
%\label{fig:chan}
%\end{figure*}
%%%%%%%%%%%%%%%%%%%%%%%%%%%%%%%%%%%%%%%%%%%%%%%%%%%%%%%%%%%%%%%%%%%%%%%%%%%%%%%%%%%%%%%%%%%%%%%%%%%%%%%%%%%%%%%%%%%%%
\begin{figure*}[ht!]
  \centering
\includegraphics[width=0.82\textwidth]{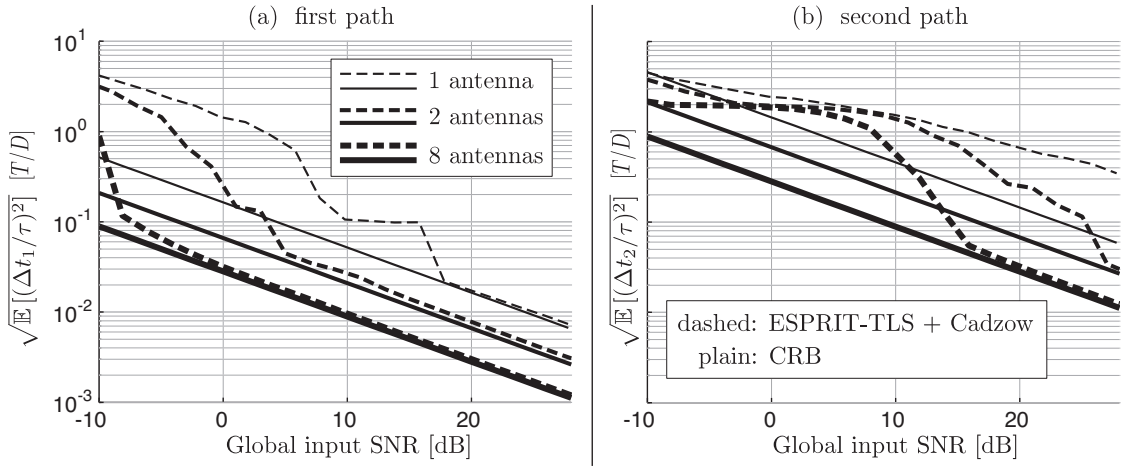}
\caption{ (\xp A) For the same global input SNR, a system with more antennas estimates the ToAs more accurately and is more resilient to noise. This is a consequence of the increased receiver diversity. The second path has $1/10^\text{th}$ the amplitude of the first path and is thus quickly buried into noise as SNR decreases. The estimation reaches the Cram\'er-Rao bound as long as it correctly identifies the path.}
\label{fig:fig1}%\vspace{-1mm}
\end{figure*}

\begin{figure*}[ht!]
  \centering
\includegraphics[width=0.82\textwidth]{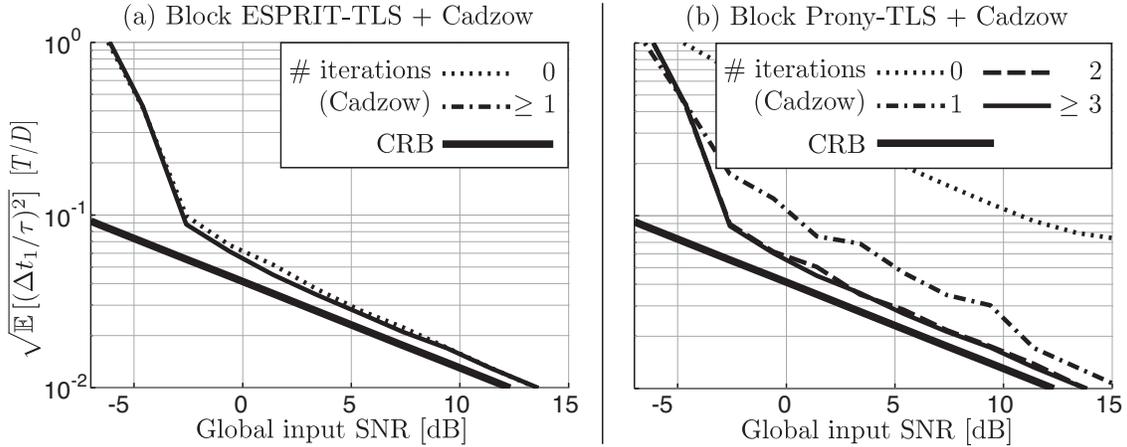}
\caption{ (\xp A) Part (a) shows the performances of Block ESPRIT-TLS with or without Block Cadzow denoising. In this setup, the gain obtained with the denoising is relatively small and is achieved after one iteration. Part (b) shows the performances of Block Prony-TLS with or without Block Cadzow denoising. As expected, the performance of Prony's algorithm without denoising is very poor. After 3 denoising iterations, performances of Block Prony-TLS and Block ESPRIT-TLS are indistinguishable.
%The large performances gap between Prony and ESPRIT methods in the absence of denoising  is to be expected as they are inherently different. Prony's method a least unidimensional subspace (a line) in a $K+1$ dimensional space. On the other hand, ESPRIT identifies a rotation between two $K$ dimensional subspaces in a $M\geq K$ dimensional space.
}
\label{fig:fig2}
\end{figure*}

\item[\textsf C] This experiment is more realistic from a physical standpoint. The receiver has 5 antennas equispaced on a circle of radius $10$~cm. The propagation medium contains 4 scatterers (Figure~\ref{fig:chan}.(a)). The expected CIR modulus is represented in Figure~\ref{fig:chan}.(b).  We use the spatial correlation model derived in Proposition~\ref{prop:spacecorr}, and provide the antennas cross-correlation in Figure~\ref{fig:chan}.(c). Also the channel is not exactly SCS, with a maximum delay $\varepsilon=T/50=1$ns.
\end{itemize}
Results were obtained on 400 independent noise and fading realisations.
%%%%%%%%%%%%%%%%%
\begin{figure*}[ht!]
  \centering
\includegraphics[width=0.8\textwidth]{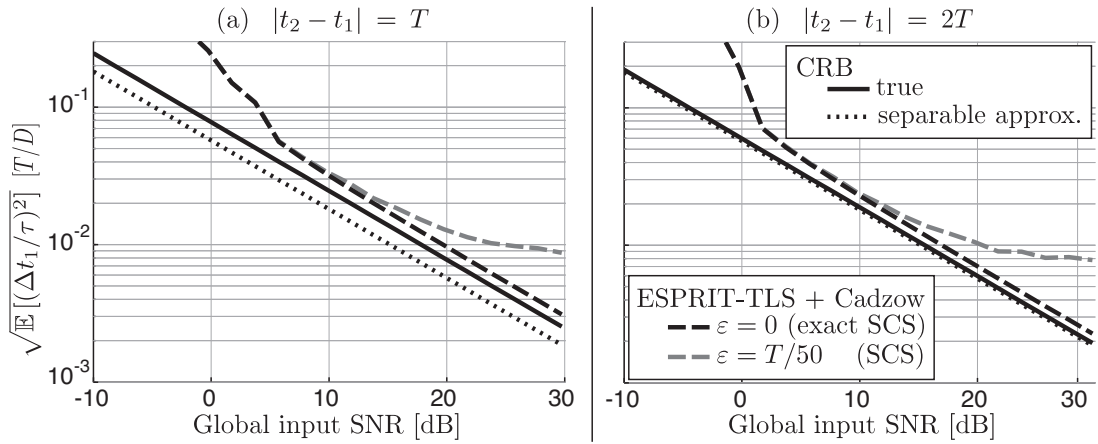}
\caption{ (\xp B) This figure shows that the proposed algorithms behave as expected in the presence of ToA mismatches between antennas. Part (b) motivates the separability assumption to compute the CRB of paths located more than $2T$ apart, while Part (a) shows its inadequacy for a smaller delay $T$. The ``true'' estimate is obtained via Monte-Carlo simulations.}
\label{fig:fig3}
\end{figure*}

\begin{figure*}[ht!]
  \centering
\includegraphics[width=0.7\textwidth]{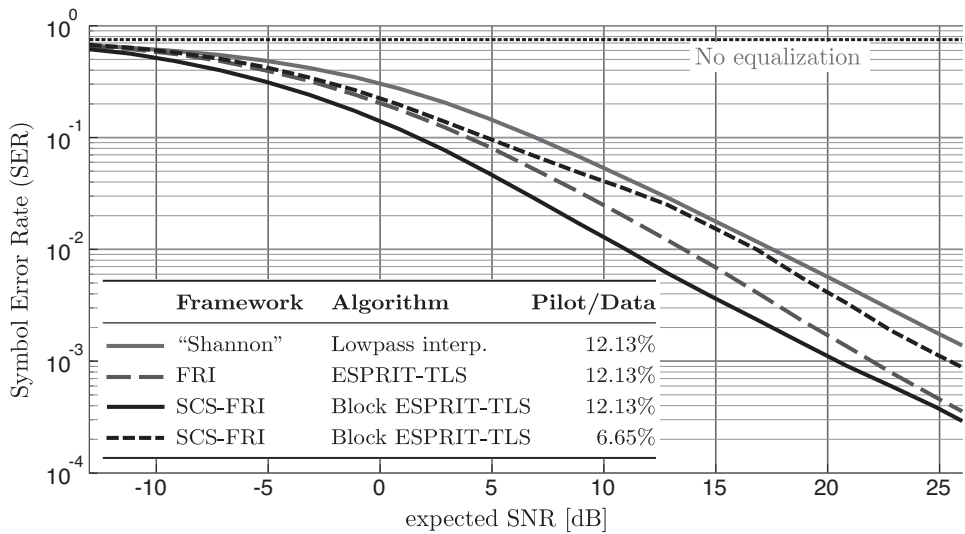}
\caption{ (\xp C) Using the SCS property, the SER is decreased by a factor $5$ above $10$dB of SNR compared to the conventional non-parametric approach. Sparsity alone provides a significant SER improvement, which shall be combined with the common support property below $30$dB of SNR. At very high SNR, independent channel estimation across antennas become preferable as the channels only approximately have the common support property. However, below $15$dB of SNR the effect of this approximation are undetectable. Another advantage of SCS-FRI is the reduction of pilots, it allows to halve their number while retaining performances superior to lowpass interpolation.}
\label{fig:fig4}
\end{figure*}
\subsubsection{Results on \xp A}
Figure~\ref{fig:fig1} shows that the SCS-FRI algorithm efficiently estimates the ToA down to a certain SNR where the recovery breaks down. This breaking point is pushed lower as spatial diversity increases, which is to be expected. Figure~\ref{fig:fig2} compares the use and combination of the various subspace identification techniques discussed earlier. The conclusion is that the performances of Block-ESPRIT TLS or Block-Prony TLS are exactly the same on a signal denoised with the Block-Cadzow algorithm. However Block-ESPRIT TLS requires fewer to none Block-Cadzow iterations than Block-Prony TLS to reach the optimum. It is well-known that Prony TLS is not robust to noise \cite{Blu2008,Tufts1982}.
\subsubsection{Results on \xp B}
Figure~\ref{fig:fig3} shows that the single path CRB given in Proposition~\ref{prop:crbp} is a good approximation of the true bound computed via Proposition~\ref{prop:nsep} for multiple paths separated by more than twice the inverse bandwidth of the channel. This experiment also verifies the usefulness of the SCS assumption when ToAs are slightly perturbed from one antenna to another: 
$$t_{k,p}=t_k+E_{k,p}\ ,\quad E_{k,p} \sim \mathrm U([-\varepsilon\ \varepsilon]),\ \text{i.i.d.}$$
The error caused by the random perturbation $E_{k,p}$ is of the order of the perturbation itself, and thus we may say SCS-FRI is robust on non exact SCS channels.
\subsubsection{Results on \xp C}
All estimation algorithms use the fact that the delay spread is much shorter than the frame length. The difference between lowpass interpolation and other techniques is the use of the sparsity property. Using this property alone, the SER is halved at a SNR of $5$dB as shown in Figure~\ref{fig:fig4}. The addition of the SCS property proves to be valuable, at $5$dB of SNR the SER is decreased by a factor $3$. At high SNR, the SCS property provides a factor $5$ of improvement over lowpass interpolation. At very high SNR the error due to the approximate SCS nature of the channels diminishes this gain, and eventually the SCS assumption becomes detrimental.

It also shows that the number of pilots can be halved while having SER performances superior to the non-parametric approach (we retained half of the original pilots closest to the carrier frequency). For lowpass interpolation, this cannot be done without introducing aliasing. Reducing the number of pilots below ``Nyquist'' is relevant at high SNR where little redundency is required for denoising, leaving some additional spectrum for data transmission. In favorable transmission conditions, it would be possible to reduce the number of pilots down to the rate of innovation of the channel to maximize the data throughput.

\section{Conclusion}
We outlined the SCS-FRI algorithm, studied its performances on SCS channels estimation and computed theoretical lower-bounds for comparison. A spatial channel model was proposed for simulation purposes. The algorithm takes full advantage of the main properties of outdoor multipath channels and is directly applicable to most OFDM based communication standards. Simulations indicate that SCS-FRI based on the Block-ESPRIT TLS routine seems to be the most suitable since it requires only two partial SVD with size of the model order and provides optimal accuracy.

Future work is needed for estimation of the model order, incorporation of temporal correlation in the model and the algorithm (tracking of the model parameters). Computational complexity is also a crucial point for mobile applications, and improvements could be made with Krylov subspaces techniques as in \cite{Barbotin2009}.

% if have a single appendix:
%\appendix[Proof of the Zonklar Equations]
% or
%\appendix  % for no appendix heading
% do not use \section anymore after \appendix, only \section*
% is possibly needed

% use appendices with more than one appendix
% then use \section to start each appendix
% you must declare a \section before using any
% \subsection or using \label (\appendices by itself
% starts a section numbered zero.)
%

\appendices

\section{Spatial correlation formula for fading channels}\label{sec:appendixB}

\subsection{Azimuthal scatterers density distribution}
%\begin{figure}[h]
%\centering
%\includegraphics[width=3.5in]{circgaussian}
%\caption{Azimuthal density of reflections at a receiving antenna}
%\label{fig:circgaussian}
%\end{figure}
The reflection density of each scatterer is normally distributed with mean $\vct\mu_k$ (its position) and covariance matrix $\sigma_k^2\mathbb I$ (its ``girth''). The number of reflections within a scatterer is assumed to be large enough to warrant their approximation by their continuous probability density function. The azimuthal  density is the integral of the scatterer's pdf over $\Gamma_\vartheta$ the straight path from the receiving antenna at an angle\footnote{Without loss of generality the scatterer origin is at azimuth $0$, and the antenna is located at position $\vct 0$} $\vartheta$:
\begin{equation}
p(\vartheta;\vct \mu_k, \sigma_k^2)=\int_{\Rspace^2}f^{(2D)}_{\sigma_k^2}(\vct x -\vct\mu_k)\mathcal I_{\vct x\in\Gamma_\vartheta}\vct{dx}.
\end{equation}
Reparametrization in polar coordinates yield:
\begin{align*}
p(\vartheta;\vct \mu_k, \sigma_k^2)=\: &f_{\sigma_k^2}(\|\vct\mu_k\|\sin(\vartheta))\\ &\cdot\int_{\Rspace_+}f_{\sigma_k^2}(r-\|\vct\mu_k\|\cos(\vartheta))J_{\vct x}(r,\vartheta)dr,\\
=\: &\sigma_k^{-1}f\left(\sqrt{\kappa^\prime_k}\sin(\vartheta)\right)\\ &\cdot\int_{\sqrt{\kappa^\prime_k}\cos(\vartheta)}^{+\infty}\sigma_k^{-1}f(r-\|\vct\mu_k\|\cos(\vartheta))\\ & \qquad\qquad\qquad\cdot (s+\sqrt{\kappa^\prime_k}\cos(\vartheta))\sigma_k^2 ds
\end{align*}
such that $\kappa^\prime_k=\|\vct\mu_k\|^2/\sigma_k^2$ and $J_{\vct x}(r,\vartheta)=r$ is the Jacobian of the cartesian to polar transformation. We performed the change of variable $s=r-\sqrt{\kappa^\prime_k}\cos(\vartheta)$.
Hence, the distribution has only one degree of freedom, and after some calculus:
\begin{align}\label{eq:circdist}
p_{\kappa^\prime_k}(\vartheta)=\: &f(\sqrt{\kappa^\prime_k}\sin\vartheta)\\ &\cdot\left[\sqrt{\kappa^\prime_k}\cos\vartheta\cdot F(\sqrt{\kappa^\prime_k}\cos\vartheta)f(\sqrt{\kappa^\prime_k}\cos\vartheta)\right].\nonumber	
\end{align}
The circular distribution (\ref{eq:circdist}) is well approximated by a Von-Mises distribution of scale $\kappa_k$:
\begin{equation}
q_{\kappa_k}(\vartheta)=\frac{e^{\kappa_k\cos\vartheta}}{2\pi I_0(\kappa_k)}.
\end{equation}
where $I_0$ is the $0^{th}$ order modified Bessel function of the first kind. Asymptotically, $\kappa^\prime_k\overset{\kappa^\prime_k\rightarrow \infty}{=}\kappa_k$, and the approximation $\kappa^\prime_k\approx (1-e^{-3\kappa_k/4})\kappa_k$ was found to be empirically accurate for all $\kappa_k$ (K-L divergence between $p_{\kappa^\prime_k}$ and $q_{\kappa_k}$ is less than $0.02$ bits). 
\subsection{Derivation of the correlation matrix formula}
Considering the setup of Figure~\ref{fig:mpchannel}, and from \cite{Salz1994}:
$$R_Z^{(k)}[m,n]\:=\:\int_{-\pi}^{\pi} q_{\kappa_k}(\vartheta+\theta_{m,n}-\theta_k) e^{j\frac{\omega_c}{c}d_{m,n}\sin\vartheta}d\vartheta.$$
Then, $q_{\kappa_k}$ is expanded in terms of spherical harmonics via the Jacobi-Anger expansion \cite{Abramowitz1964}(9.1): %cite: abramowitz 9.6.34
\begin{align*}& q_{\kappa_k}(\vartheta+\theta_{m,n}-\theta_k)\\=\: &\frac{1}{2\pi I_0(\kappa_k)}\left\lbrace J_0(-j\kappa_k)\vphantom{\sum_{l=1}^\infty}\right.\\
&\left.+\sum_{l=1}^\infty j^lJ_l(-j\kappa_k) \cos\left[l(\vartheta+\theta_{m,n}-\theta_k)\right]\right\rbrace\ ,\\
=\: &\frac{1}{2\pi}+\frac{1}{\pi I_0(\kappa_k)}\sum_{l=1}^\infty I_l(\kappa_k) \cos\left[l(\vartheta+\theta_{m,n}-\theta_k)\right]\ ,
\end{align*}
where the second equality is obtained with $I_l(jx)=j^lJ_l(x)$ \cite{Abramowitz1964}(9.6.3, 9.1.35).

We now have a series for $R_Z^{(k)}[m,n]$ with $l^{th}$ term:
\begin{align*} & \frac{ I_l(\kappa_k)}{\pi I_0(\kappa_k)}\int_{-\pi}^{\pi}\cos\left[l(\vartheta+\theta_{m,n}-\theta_k)\right] e^{j\frac{\omega_c}{c}d_{m,n}\sin\vartheta}d\vartheta \\
\overset{(a)}{=}\quad&\frac{ I_l(\kappa_k)}{\pi I_0(\kappa_k)}
\left\lbrace\vphantom{\int_{-\pi}^{\pi}}\cos\left[l\left(\theta_k-\theta_{m,n}-\frac{\pi}{2}\right)\right]\right.\\ 
&\qquad\qquad\qquad\cdot\int_{-\pi}^{\pi}\cos l\vartheta\; e^{j\frac{\omega_c}{c}d_{m,n}\cos\vartheta}d\vartheta\\
 & \qquad\qquad+\sin\left[l\left(\theta_k-\theta_{m,n}-\frac{\pi}{2}\right)\right]\\
 &\qquad\qquad\qquad\left.\cdot\int_{-\pi}^{\pi}\sin l\vartheta\; e^{j\frac{\omega_c}{c}d_{m,n}\cos\vartheta}d\vartheta\;\right\rbrace\\
\overset{(b)}{=}\quad& \frac{2 I_l(\kappa_k)}{I_0(\kappa_k)}I_l\left(j\frac{\omega_c}{c}d_{m,n}\right)\cos\left[l\left(\theta_k-\theta_{m,n}-\frac{\pi}{2}\right)\right]\\
\overset{(c)}{=}\quad& \frac{ 2j^{l}I_l(\kappa_k)}{I_0(\kappa_k)}J_l\left(\frac{\omega_c}{c}d_{m,n}\right)\cos\left[l\left(\theta_k-\theta_{m,n}-\frac{\pi}{2}\right)\right]
\end{align*}
Equality $(a)$ is obtained with some standard trigonometric identities and a shift by $-\frac{\pi}{2}$ of the variable of integration. Equality $(b)$ follows from the standard integral representation of $I_l$ (\cite{Abramowitz1964} 9.6.19). The second integrand is antisymmetric which leads the integral over the unit-circle to vanish. Finally $(c)$ is a consequence of $I_l(jx)=j^lJ_l(x)$ again.
Hence:
\begin{align*}R_Z^{(k)}[m,n] =& J_0\left(\frac{\omega_c}{c}d_{m,n}\right)\\
&+\frac{2}{I_0(\kappa_k)}\sum_{l=1}^\infty j^{l}I_l(\kappa_k)J_l\left(\frac{\omega_c}{c}d_{m,n}\right)\\
&\qquad\qquad\qquad\cdot\cos\left[l\left(\theta_k-\theta_{m,n}-\frac{\pi}{2}\right)\right].
\end{align*}

% use section* for acknowledgement
%\section*{Acknowledgment}

%The authors would like to thank...

% Can use something like this to put references on a page
% by themselves when using endfloat and the captionsoff option.
\ifCLASSOPTIONcaptionsoff
  \newpage
\fi

% trigger a \newpage just before the given reference
% number - used to balance the columns on the last page
% adjust value as needed - may need to be readjusted if
% the document is modified later
%\IEEEtriggeratref{8}
% The "triggered" command can be changed if desired:
%\IEEEtriggercmd{\enlargethispage{-5in}}

% references section

% can use a bibliography generated by BibTeX as a .bbl file
% BibTeX documentation can be easily obtained at:
% http://www.ctan.org/tex-archive/biblio/bibtex/contrib/doc/
% The IEEEtran BibTeX style support page is at:
% http://www.michaelshell.org/tex/ieeetran/bibtex/
\bibliographystyle{IEEEtran}
% argument is your BibTeX string definitions and bibliography database(s)
\bibliography{IEEEabrv,./SCS_FRI}
%

% biography section
%
% If you have an EPS/PDF photo (graphicx package needed) extra braces are
% needed around the contents of the optional argument to biography to prevent
% the LaTeX parser from getting confused when it sees the complicated
% \includegraphics command within an optional argument. (You could create
% your own custom macro containing the \includegraphics command to make things
% simpler here.)
%\begin{biography}[{\includegraphics[width=1in,height=1.25in,clip,keepaspectratio]{mshell}}]{Michael Shell}
% or if you just want to reserve a space for a photo:

%\begin{IEEEbiography}{Michael Shell}
%Biography text here.
%\end{IEEEbiography}
%
%% if you will not have a photo at all:
%\begin{IEEEbiographynophoto}{John Doe}
%Biography text here.
%\end{IEEEbiographynophoto}
%
%% insert where needed to balance the two columns on the last page with
%% biographies
%%\newpage
%
%\begin{IEEEbiographynophoto}{Jane Doe}
%Biography text here.
%\end{IEEEbiographynophoto}

% You can push biographies down or up by placing
% a \vfill before or after them. The appropriate
% use of \vfill depends on what kind of text is
% on the last page and whether or not the columns
% are being equalized.

%\vfill

% Can be used to pull up biographies so that the bottom of the last one
% is flush with the other column.
%\enlargethispage{-5in}

% that's all folks
\end{document}